\documentclass[a4paper,10pt]{article}
\usepackage[T1]{fontenc} % codifica dei font in uscita
\usepackage[utf8]{inputenc} % lettere accentate da tastiera
\usepackage[italian,english]{babel} % lingua principale del documento
\usepackage{amsmath,amssymb,mathrsfs,amsthm,amscd}
\usepackage{mathtools} % per il comando rcases
\usepackage{xcolor,booktabs} % per colorare il testo
\usepackage{ragged2e} % pacchetto per giustificare il testo nell'ambiente center
\usepackage{braket} % per usare il comando \Set e per le parentesi bra e ket
\usepackage{appendix} % per i titoletti dell'appendice
\usepackage{titlesec}
\titleformat{\section}{\normalfont\Large\scshape}{\thesection.}{.3em}{\filcenter}
\titlespacing{\section}{\parindent}{3.5ex plus .1ex minus .2ex}{3.5ex plus .2ex}
\titleformat{\subsection}{\normalfont\scshape}{\thesubsection.}{.5em}{}
\titlespacing{\subsection}{\parindent}{2.5ex plus .1ex minus .2ex}{2ex plus .2ex}
\titleformat{\subsubsection}[runin]{\normalfont\normalsize}{\thesubsubsection.}{.5em}{\itshape}[.]
\titlespacing{\subsubsection}{\parindent}{1.5ex plus .1ex minus .2ex}{1.5ex plus .2ex}

\theoremstyle{definition}
\newtheorem{defn}{Definition}[subsection]
\theoremstyle{definition}
\newtheorem{rem}{Remark}[subsection]
\theoremstyle{plain}
\newtheorem{lemma}{Lemma}[subsection]
\newtheorem{prop}{Proposition}[subsection]
\newtheorem{theorem}{Theorem}[subsection]

\usepackage{makeidx}

\begin{document}

\centerline {\LARGE {\bf {\scshape Renormalization of Higher Currents}}}
\vskip 10 pt
\centerline {\LARGE {\bf {\scshape of the sine-Gordon Model in pAQFT}}}
\vskip 50 pt
\centerline {\normalsize {\scshape Fabrizio Zanello}}
\vskip 10 pt
\centerline {\normalsize {\scshape Mathematisches Institut, Georg-August Universit\"at G\"ottingen}}
\centerline {\small {\itshape Email address:} {\ttfamily fabrizio.zanello@mathematik.uni-goettingen.de}}
\vskip 200 pt
\begin{abstract}
\scriptsize
In this paper we show that the higher currents of the sine-Gordon model are super-renormalizable by power counting in the framework of pAQFT. First we obtain closed recursive formulas for the higher currents in the classical theory and introduce a suitable notion of degree for their components. We then move to the pAQFT setting and, by means of some technical results, we compute explicit formulas for the unrenormalized interacting currents. Finally, we perform what we call the piecewise renormalization of the interacting higher currents, showing that the renormalization process involves a number of steps which is bounded by the degree of the classical conserved currents.
\end{abstract}

\thispagestyle{empty}

%\tableofcontents

\newpage

\section{Introduction}

This paper is the first step of a bigger project, which aims at understanding to what extent features of the conservation laws of classical systems are preserved when considering the corresponding quantum counterpart. More specifically, the focus of this paper is on the 2-dimensional massless sine-Gordon model, both from the point of view of Classical Field Theory and of perturbative Algebraic Quantum Field Theory (pAQFT for short).

During its long history, the sine-Gordon model has been keeping exhibiting a remarkable richness of properties. As a classical relativistic non-linear scalar field theory, it was found to be an example of integrable system. This encompasses features like: existence of an infinite number of solutions to the sine-Gordon equation (see \cite{Shn}), related by the B\"acklund transformations, and existence of an infinite number of conserved higher currents, which moreover form a commutative algebra with respect to the Peierl's bracket (see \cite{DoiFin}). Particularly relevant to our purpose is their interpretation in terms of Noether's Theorem, proposed in \cite{Steu74} and \cite{Steu76}. 

As a quantum physical system, the sine-Gordon model admits a non-trivial scattering theory. In recent years, it has revealed remarkable features also in the context of pAQFT. In particular, as shown in \cite{BR18} and \cite{BRF21}, the scattering matrix of the 2-dimensional massless sine-Gordon model in Minkowski signature was explicitly constructed and its summability proved, building partially on older results in Euclidean signature (e.g. \cite{Frohlich}). These results represent the starting point of this work, which aims at investigating the renormalization properties of the higher currents in the framework of pAQFT.

We adopt the Epstein-Glaser point of view on renormalization and prove, as main result, that the components of the higher currents are super-renormalizable by power counting in pAQFT (see \cite{DF}). 

We remark that, compared to other approaches, in our setting we do not need Fock techniques and we are hence not concerned with Fock space representations issues.

We also point out that our argument follows from well-known results on scaling-degree-preserving extensions of distributions (\cite{BF}, \cite{D}, \cite{Horm}, \cite{Stein}) and on a notion of degree that we introduce based on the concrete expressions of the higher currents, which gives a bound on the number of counterterms necessary in the renormalization process. Unlike other renormalization techniques though, we do not compute the explicit counterterms.

However, we believe that the notations and technical results that we introduce along the way might represent a good foundation for further investigations of the summability and convergence properties of the renormalized interacting higher currents. For a discussion of the renormalizability, summability, conservation and other properties of the first of the higher conserved currents of the sine-Gordon model, namely its stress-energy tensor, where the counterterms are explicitly computed, we refer to \cite{CadFrob} and \cite{CadFrob2}.

The paper is organized as follows. Section \ref{sec:class_theory} is devoted to the classical theory of the sine-Gordon model. We generalize the work done in \cite{Steu76} to the sine-Gordon with coupling constant (so that the setting considered in \cite{Steu76} is recovered as a special case for the value of the coupling constant $a=1$) and moreover we obtain explicit expressions for the components of the higher conserved currents. Along the way, we also introduce a notion of degree that will reveal to be crucial in the discussion of the renormalization of the currents in pAQFT.

In Section \ref{sec:unrenorm} we prove some technical results, on the star products and on the time-ordered products of fields with specific properties, that allow us to find closed and explicit expressions for the unrenormalized time-ordered products and the retarded components of the currents.

Finally, in Section \ref{sec:renorm} we show the renormalizability of the components of the conserved currents. We do this in three steps: first we further expand the unrenormalized expressions of the retarded components to their very elementary parts, then we piecewise renormalize the elementary parts separately and in the end we show that reassembling the piecewise renormalized parts all together gives a well-defined renormalized version of the retarded components of the currents.

\section{Conserved currents in the classical theory}
\label{sec:class_theory}

In this section we explain how the higher conserved currents for the classical sine-Gordon model can be obtained. Conceptually we follow the same passages as in \cite{Steu76}. However we extend all the definitions (given there only for the standard sine-Gordon model) to the general case of the sine-Gordon model with coupling constant (in the following referred to as the general sine-Gordon model or simply as the sine-Gordon model). Hence also our results are more general and, moreover, we derive explicit recursive formulas for the quantities involved. \\

Let us start introducing some basic notions. The sine-Gordon model is a massless relativistic non-linear scalar field theory. The r\^ole of spacetime is played by the 2-dimensional Minkowski space $\mathbb{M}_2$. The configuration bundle of the theory is the trivial bundle $\mathbb{M}_2\times\mathbb{R}\longrightarrow\mathbb{M}_2$. Configurations are sections of this bundle, namely functions $\varphi\in C^\infty(\mathbb{M}_2)$. Adopting cartesian coordinates $(x^0=:t,x^1=:\vec{x})$ on $\mathbb{M}_2$, with Minkowski metric $\eta=\text{diag}(-1,1)$, the Lagrangian of the sine-Gordon model is written as:
\begin{equation*}
L(\varphi)\,dt\wedge d\vec{x}=(L_0+L_{\text{int}})\,dt\wedge d\vec{x}=\Big(\frac{1}{2}\partial_\mu\varphi\partial^\mu\varphi+\cos(a\varphi)\Big)\,dt\wedge d\vec{x},
\end{equation*}
where the parameter $a>0$ is called coupling constant. The corresponding Euler-Lagrange equation, also called sine-Gordon equation, is
\begin{equation*}
-\square\varphi+a\sin(a\varphi)=\partial^2_t\varphi -\partial^2_{\vec{x}}\varphi+a\sin(a\varphi)=0.
\end{equation*}

In the sequel we will always work in another system of coordinates, which turns out to be particularly useful in the description of the conservation laws of the sine-Gordon model. Light-cone coordinates $(\tau,\xi)$ are defined by:
\begin{equation*}
\tau=\frac{1}{2}(\vec{x}+t),\qquad\xi=\frac{1}{2}(\vec{x}-t).
\end{equation*}
The sine-Gordon Lagrangian in light-cone coordinates becomes
\begin{equation*}
L(\varphi)\,d\tau\wedge d\xi=\Big(\frac{1}{2}\varphi_\xi\varphi_\tau+\cos(a\varphi)\Big)\,d\tau\wedge d\xi,
\end{equation*}
and the sine-Gordon equation is
\begin{equation}
\label{eq:s-G eq}
\varphi_{\xi\tau}-a\sin(a\varphi)=0,
\end{equation}
where we also adopt the convention that subscripts $_\tau$ and $_\xi$ indicate partial derivation w.r.t. the corresponding coordinate.
\begin{rem}
The so-called standard sine-Gordon model, which is more often treated in the literature, for example also in \cite{Steu76}, is the special case for $a=1$.
\end{rem}

\subsection{Extended B\"acklund transformations}
\label{sec:s-Gcoupl}

Extended B\"acklund transformations are defined in \cite{Steu76} for the standard sine-Gordon model. We extend the definition to the general sine-Gordon model by introducing also a dependence on the coupling constant $a>0$. 
\begin{defn}
We say that the configuration $\varphi'\in C^\infty(\mathbb{M}_2)$ is obtained from a given configuration $\varphi\in C^\infty(\mathbb{M}_2)$ by an extended B\"acklund transformation $\hat{B}_\alpha$ of parameter $\alpha\in\mathbb{R}$, in notation $\varphi'=\hat{B}_\alpha\varphi$, if $\varphi'$ satisfies the following parametric PDE:
\begin{equation}
\label{eq:MBTa}
\frac{1}{2}(\varphi'+\varphi)_\xi=\frac{1}{\alpha}\sin\big[\frac{1}{2}a(\varphi'-\varphi)\big]
\end{equation}
\end{defn}
Assuming that $\varphi'$ admits a power series expansion in the parameter $\alpha$, we write:
\begin{equation}
\label{eq:powA}
\varphi'=\sum_{\nu=0}^\infty A_\nu[a,\varphi]\alpha^\nu,
\end{equation}
where the coefficients $A_\nu$ depend on both the coupling constant $a$ and the initial configuration $\varphi$. We can now substitute the power series expansion (\ref{eq:powA}) in equation (\ref{eq:MBTa}), using also the power series expansion of sine, and compare order by order in $\alpha$. Omitting the dependence on $a$ and $\varphi$, we obtain the following expressions for the first coefficients $A_\nu$:
\begin{equation*}
A_0=\varphi,\qquad A_1=\frac{2}{a}\varphi_\xi,\qquad A_2=\frac{2}{a^2}\varphi_{\xi\xi}.
\end{equation*}
Carrying out a detailed study of equation (\ref{eq:MBTa}), it is possible to obtain an explicit recursive formula for the higher coefficients $A_\nu$.
\begin{prop}
\label{prop:rec_formula}
For $\nu\ge 2$ the following recursive formula holds:
\begin{equation}
\label{eq:recA}
\begin{split}
&A_{\nu+1}=\frac{1}{a}A_{\nu,\xi}+ \\
&+\sum_{\beta=0}^{[\frac{\nu}{2}]-1}(-1)^\beta\big(\frac{1}{2}a\big)^{2(\beta+1)}\sum_{\substack{n_0,\dots,n_{\nu-2-2\beta}\ge 0 \\ n_0+\dots+n_{\nu-2-2\beta}=2\beta+3 \\ 1\cdot n_1+\dots +(\nu-2-2\beta)\cdot n_{\nu-2-2\beta}=\nu-2-2\beta}}\frac{A_1^{n_0}\cdots A_{\nu-1-2\beta}^{n_{\nu-2-2\beta}}}{n_0!\cdots n_{\nu-2-2\beta}!},
\end{split}
\end{equation}
where $[\frac{\nu}{2}]$ denotes the integer part of $\frac{\nu}{2}$.
\end{prop}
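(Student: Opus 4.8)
The plan is to substitute the Ansatz (\ref{eq:powA}) into the defining PDE (\ref{eq:MBTa}) and to compare the coefficients of $\alpha^{\nu+1}$ on both sides, solving for the single unknown $A_{\nu+1}$ that appears linearly at that order. First I would clear the denominator in (\ref{eq:MBTa}), rewriting it as $\tfrac{\alpha}{2}(\varphi'+\varphi)_\xi=\sin\big[\tfrac a2(\varphi'-\varphi)\big]$. Using $A_0=\varphi$ from (\ref{eq:powA}), the two combinations read $\varphi'+\varphi=2\varphi+\sum_{\mu\ge1}A_\mu\alpha^\mu$ and $\varphi'-\varphi=\sum_{\mu\ge1}A_\mu\alpha^\mu$, so the argument of the sine, say $S:=\tfrac a2(\varphi'-\varphi)=\tfrac a2\sum_{\mu\ge1}A_\mu\alpha^\mu$, is a power series starting at order $\alpha$. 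On the left-hand side the coefficient of $\alpha^{\nu+1}$ (for $\nu\ge1$) is simply $\tfrac12 A_{\nu,\xi}$.

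Next I would expand the right-hand side via $\sin S=\sum_{k\ge0}\tfrac{(-1)^k}{(2k+1)!}S^{2k+1}$ together with the multinomial theorem applied to $S^{2k+1}=\big(\tfrac a2\big)^{2k+1}\big(\sum_{\mu\ge1}A_\mu\alpha^\mu\big)^{2k+1}$. Writing $n_j$ for the exponent of $A_{j+1}$, a monomial $A_1^{n_0}A_2^{n_1}\cdots$ with $\sum_j n_j=2k+1$ carries the power $\alpha^{\sum_j(j+1)n_j}$, whose exponent equals $\nu+1$ precisely when $\sum_j n_j=2k+1$ and $\sum_j j\,n_j=\nu-2k$. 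The crucial observation is that the term $k=0$ contributes exactly $\tfrac a2 A_{\nu+1}$ and is the only place where $A_{\nu+1}$ occurs: for every $k\ge1$ the highest-index coefficient that can appear at order $\alpha^{\nu+1}$ is $A_{\nu+1-2k}$, whose index is strictly below $\nu+1$. Matching coefficients therefore yields the linear relation $\tfrac12 A_{\nu,\xi}=\tfrac a2 A_{\nu+1}+(\text{terms in }A_1,\dots,A_{\nu-1})$, which determines $A_{\nu+1}$ from the previously computed coefficients together with $A_{\nu,\xi}$, making the recursion well defined.

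Finally I would carry out the bookkeeping of constants. Since the multinomial prefactor $(2k+1)!$ cancels against the $1/(2k+1)!$ from the sine expansion, isolating $A_{\nu+1}$ produces the term $\tfrac1a A_{\nu,\xi}$ together with a sum over $k\ge1$ whose prefactor is $-\tfrac2a(-1)^k\big(\tfrac a2\big)^{2k+1}=(-1)^{k+1}\big(\tfrac a2\big)^{2k}$. Reindexing by $\beta:=k-1$ turns the sign into $(-1)^\beta$, the prefactor into $\big(\tfrac12 a\big)^{2(\beta+1)}$, the count constraint into $\sum_j n_j=2\beta+3$ and the degree constraint into $\sum_j j\,n_j=\nu-2-2\beta$, which is exactly (\ref{eq:recA}); the highest coefficient entering is then $A_{\nu-1-2\beta}$, as stated. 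The range of $\beta$ follows from $\sum_j j\,n_j=\nu-2-2\beta\ge0$, i.e. $\beta\le\tfrac{\nu-2}{2}$, so that $\beta$ runs up to $\big[\tfrac\nu2\big]-1$. The only delicate point is this index-and-constant matching; the structural content---that $A_{\nu+1}$ enters \emph{linearly} and solely through the first term of the sine expansion---is what guarantees that (\ref{eq:recA}) is a genuine recursion, and a sanity check against the explicitly given values $A_0,A_1,A_2$ fixes all conventions.
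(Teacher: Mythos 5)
Your proposal is correct and follows essentially the same route as the paper's proof in Appendix \ref{app:Back_coupling_const}: substitute the power series into (\ref{eq:MBTa}), expand the sine via the multinomial theorem, observe that $A_{\nu+1}$ occurs linearly and only in the leading ($k=0$, resp. $\beta=0$) term, and solve for it; your clearing of the $1/\alpha$ denominator versus the paper's comparison at order $\alpha^\nu$, and your reindexing $\beta=k-1$ versus the paper's $\nu=\rho+2\mu$, $\beta=\mu$, are only cosmetic differences. The constant bookkeeping, the cancellation of $(2k+1)!$, the constraint matching, and the range $\beta\le[\tfrac{\nu}{2}]-1$ all check out.
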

\begin{proof}
The proof is given in Appendix \ref{app:Back_coupling_const}.
\end{proof}
The first coefficients obtained using formula (\ref{eq:recA}) have the form:
\begin{equation}
\label{eq:A_rec_form}
\begin{split}
A_3&=\frac{2}{a^3}\varphi_{\xi\xi\xi}+\frac{1}{3a}\varphi_\xi^3, \\
A_4&=\frac{2}{a^4}\varphi_{4\xi}+\frac{2}{a^2}\varphi_\xi^2\varphi_{\xi\xi}.
\end{split}
\end{equation}
\begin{rem}
\label{rem:all poly}
We observe that from formula (\ref{eq:recA}) it follows that the coefficients $A_\nu$ are all polynomials in the derivatives of the configuration $\varphi$ with respect only to the ligh-cone coordinate $\xi$.
\end{rem}
\begin{rem}
As a consistency check, the expressions for the coefficients $A_\nu$ presented in \cite{Steu76} are recovered from our expressions setting $a=1$.
\end{rem}

We introduce here a notion that turns out to be crucial for the subsequent discussion of the renormalization of the higher currents in pAQFT.
\begin{defn}
\label{defn:deg_A}
Consider a configuration $\varphi\in C^\infty(\mathbb{M}_2)$. We assign a degree to its $k$-th derivative with respect to the light-cone coordinate $\xi$, by:
\begin{equation*}
\text{deg}(\varphi_{k\xi})=k,\qquad\forall k\in\mathbb{N}.
\end{equation*}
We extend this definition to monomials in the derivatives of $\varphi$ by additivity:
\begin{equation*}
\text{deg}(\varphi_{k_1\xi}\varphi_{k_2\xi}\dots\varphi_{k_N\xi})=k_1+k_2+\dots+k_N.
\end{equation*}
We say that a polynomial in the derivatives of $\varphi$ is homogeneous of degree $d$ if all its monomial terms have degree $d$. 
\end{defn}
\begin{prop}
\label{prop:deg_A}
For every $\nu\ge 0$, the coefficient $A_\nu$ is homogeneous of degree equal to $\nu$.
\end{prop}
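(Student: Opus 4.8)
The plan is to prove the statement by induction on $\nu$, feeding on the recursive formula (\ref{eq:recA}). The base cases are settled by direct inspection: $A_0=\varphi=\varphi_{0\xi}$ has degree $0$, $A_1=\frac{2}{a}\varphi_\xi$ has degree $1$, and $A_2=\frac{2}{a^2}\varphi_{\xi\xi}$ has degree $2$, so the claim holds for $\nu=0,1,2$, which is exactly where the recursion starts producing new coefficients. For the inductive step I assume that $A_j$ is homogeneous of degree $j$ for all $0\le j\le\nu$ (with $\nu\ge 2$) and aim to show that $A_{\nu+1}$, as given by (\ref{eq:recA}), is homogeneous of degree $\nu+1$. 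Throughout, the coupling constant $a$ and the numerical and factorial prefactors are scalars not involving any derivative of $\varphi$, so by Definition \ref{defn:deg_A} they do not contribute to the degree.

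The first ingredient I would isolate is the elementary observation that differentiation with respect to $\xi$ raises the degree of a homogeneous polynomial by exactly one. Indeed, by the Leibniz rule $\partial_\xi(\varphi_{k_1\xi}\cdots\varphi_{k_N\xi})$ is a sum of monomials, each obtained by replacing a single factor $\varphi_{k_i\xi}$ with $\varphi_{(k_i+1)\xi}$, and each such monomial therefore has degree $(k_1+\dots+k_N)+1$. Consequently, since $A_\nu$ is homogeneous of degree $\nu$ by hypothesis, the first term $\frac{1}{a}A_{\nu,\xi}$ on the right-hand side of (\ref{eq:recA}) is homogeneous of degree $\nu+1$.

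It then remains to verify that every monomial appearing in the double sum also has degree $\nu+1$. Fixing an index $\beta$ and a tuple $(n_0,\dots,n_{\nu-2-2\beta})$ satisfying the two constraints, the inductive hypothesis gives that each factor $A_{j+1}$ is homogeneous of degree $j+1$, so the monomial $A_1^{n_0}A_2^{n_1}\cdots A_{\nu-1-2\beta}^{n_{\nu-2-2\beta}}$ is homogeneous of degree
\begin{equation*}
\sum_{j=0}^{\nu-2-2\beta}(j+1)\,n_j=\sum_{j=1}^{\nu-2-2\beta}j\,n_j+\sum_{j=0}^{\nu-2-2\beta}n_j .
\end{equation*}
By the two constraints imposed in (\ref{eq:recA}), the first summation equals $\nu-2-2\beta$ and the second equals $2\beta+3$, so their sum is $(\nu-2-2\beta)+(2\beta+3)=\nu+1$, independently of $\beta$ and of the chosen tuple. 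Hence every monomial in the double sum has degree $\nu+1$, matching the first term, and $A_{\nu+1}$ is homogeneous of degree $\nu+1$, which closes the induction.

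The argument is in essence a careful bookkeeping of degrees, and the only genuinely delicate point is the degree count in the double sum: the crucial mechanism is that the two conditions on the tuples $(n_j)$ — one fixing the total number of factors and one fixing the weighted sum of their indices — combine precisely so that the degree $\sum_j(j+1)n_j$ collapses to the constant value $\nu+1$. Checking that the prefactors carry no degree and that the base cases align with the index range $\nu\ge 2$ where (\ref{eq:recA}) applies are the only remaining routine verifications; as a sanity check one may confirm against (\ref{eq:A_rec_form}) that both monomials of $A_3$ indeed have degree $3$.
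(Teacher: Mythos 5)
Your proof is correct and follows essentially the same route as the paper's: induction on $\nu$ using the recursion (\ref{eq:recA}), with the derivative term $\frac{1}{a}A_{\nu,\xi}$ gaining one degree and the degree of each monomial in the double sum collapsing to $\nu+1$ by combining the two constraints $\sum_j n_j=2\beta+3$ and $\sum_j j\,n_j=\nu-2-2\beta$. The only differences are cosmetic (you justify the Leibniz-rule degree shift explicitly and start the induction directly at $\nu\ge 2$ rather than checking $A_3$ separately), so nothing further is needed.
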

\begin{proof}
The claim is trivial for $A_0=\varphi$, $A_1=\frac{2}{a}\varphi_\xi$ and $A_2=\frac{2}{a^2}\varphi_{\xi\xi}$. For $\nu\ge 3$ we proceed by induction. For $\nu=3$, using formulas (\ref{eq:A_rec_form}), we have
\begin{equation*}
\text{deg}(A_3)=\text{deg}\Big(\frac{2}{a^3}\varphi_{\xi\xi\xi}+\frac{1}{3a}\varphi_\xi^3\Big)=3.
\end{equation*}
Now suppose the claim is true for $\nu\le N$. The coefficient $A_{N+1}$ is given by formula (\ref{eq:recA}). The first term is $\frac{1}{a}A_{N,\xi}$ which, due to the additional derivative w.r.t. $\xi$, has degree $N+1$. The other terms are given by products
\begin{equation*}
A_1^{n_0}\cdots A_{N-1-2\beta}^{n_{N-2-2\beta}},
\end{equation*}
with the conditions on the indexes $n_0,\dots,n_{N-2-2\beta}$:
\begin{equation*}
\begin{split}
n_0+\dots+n_{N-2-2\beta}&=2\beta+3 \\
1\cdot n_1+\dots +(N-2-2\beta)\cdot n_{N-2-2\beta}&=N-2-2\beta
\end{split}
\end{equation*} 
From these conditions and additivity of the degree, it follows that
\begin{equation*}
\begin{split}
\text{deg}\big(A_1^{n_0}\cdots A_{N-1-2\beta}^{n_{N-2-2\beta}}\big)&=1\cdot n_0+\dots+(N-1-2\beta)\cdot n_{N-2-2\beta}= \\
&=n_0+n_1+\dots+n_{N-2-2\beta}+ \\
&\quad+n_1+\dots+(N-2-2\beta)\cdot n_{N-2-2\beta}= \\
&=2\beta+3+N-2-2\beta= \\
&=N+1.
\end{split}
\end{equation*}
\end{proof}

\subsection{The higher conserved currents}
\label{sec:higher cons currents}

First we restrict on-shell, namely we assume that $\varphi$ is a solution of the sine-Gordon equation (\ref{eq:s-G eq}). Then, retracing the passages of \cite{Steu76} and adapting them to our more general case (in particular using our definition of extended B\"acklund transformations $\hat{B}_\alpha$), it is easy to check that the one-parameter family of $1$-forms $s^{(\alpha)}=-s^{(\alpha)}_1d\tau+s^{(\alpha)}_2d\xi$, with components
\begin{subequations}
\begin{align}
&s^{(\alpha)}_1=\cos\big[\frac{1}{2}a(\varphi+\hat{B}_{-\alpha}\varphi)\big]+\cos\big[\frac{1}{2}a(\varphi+\hat{B}_{\alpha}\varphi)\big] \label{eq:s1} \\
&s^{(\alpha)}_2=\frac{1}{\alpha^2}\big\{2-\cos\big[\frac{1}{2}a(\varphi-\hat{B}_{-\alpha}\varphi)\big]-\cos\big[\frac{1}{2}a(\varphi-\hat{B}_{\alpha}\varphi)\big]\big\}, \label{eq:s2}
\end{align}
\end{subequations}
form a family of conserved currents, namely $\forall\alpha\in\mathbb{R}$ they satisfy the equation
\begin{equation*}
\text{div}(s^{(\alpha)})=\partial_\xi s^{(\alpha)}_1+\partial_\tau s^{(\alpha)}_2=0.
\end{equation*}

Using formula (\ref{eq:powA}) to write $\hat{B}_{\pm\alpha}\varphi$ and the power series expansion of cosine, we can expand also $s^{(\alpha)}_1$ and $s^{(\alpha)}_2$ as power series in $\alpha$. Since formulas (\ref{eq:s1}) and (\ref{eq:s2}) are symmetric in $\alpha$, only even powers will appear. We denote the results of the power series expansions by:
\begin{equation*}
s^{(\alpha)}_1=\sum_{N=0}^\infty s_1^N\alpha^{2N},\qquad s^{(\alpha)}_2=\sum_{N=0}^\infty s_2^N\alpha^{2N}.
\end{equation*}
For every order in $\alpha$ a conserved current is obtained, which we denote by
\begin{equation*}
s^N=-s^N_1d\tau+s^N_2d\xi.
\end{equation*}
\begin{prop}
\label{prop:comp_cons_curr}
The components $s^N_1$ and $s^N_2$ of the conserved currents have the following form:
\begin{equation}
\label{eq:s1_final}
\begin{split}
s_1^N=&\cos(a\varphi)\Big[2\sum_{\beta=1}^{N}(-1)^\beta\Big(\frac{1}{2}a\Big)^{2\beta}\sum_{\substack{n_1,\dots,n_{2N}\ge 0 \\ n_1+\dots+n_{2N}=2\beta \\ 1\cdot n_1+\dots +2N\cdot n_{2N}=2N}}\frac{A_1^{n_1}\cdots A_{2N}^{n_{2N}}}{n_1!\cdots n_{2N}!}\Big]+ \\
&+\sin(a\varphi)\Big[2\sum_{\beta=0}^{N-1}(-1)^{\beta+1}\Big(\frac{1}{2}a\Big)^{2\beta+1}\sum_{\substack{n_1,\dots,n_{2N}\ge 0 \\ n_1+\dots+n_{2N}=2\beta+1 \\ 1\cdot n_1+\dots +2N\cdot n_{2N}=2N}}\frac{A_1^{n_1}\cdots A_{2N}^{n_{2N}}}{n_1!\cdots n_{2N}!}\Big],
\end{split}
\end{equation}
where the coefficient of $\sin(a\varphi)$ is defined only for $N\ge 1$, and
\begin{equation}
\label{eq:s2_final}
s_2^N=2\sum_{\mu=0}^N(-1)^\mu\Big(\frac{1}{2}a\Big)^{2(\mu+1)}\sum_{\substack{n_0,\dots,n_{2(N-\mu)}\ge 0 \\ n_0+\dots+n_{2(N-\mu)}=2(\mu+1) \\ 1\cdot n_1+\dots +2(N-\mu)\cdot n_{2(N-\mu)}=2(N-\mu)}}\frac{A_1^{n_0}\cdots A_{2(N-\mu)+1}^{n_{2(N-\mu)}}}{n_0!\cdots n_{2(N-\mu)}!}.
\end{equation}
\end{prop}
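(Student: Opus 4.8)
The plan is to substitute the power–series expansion (\ref{eq:powA}) for $\hat B_{\pm\alpha}\varphi$ directly into the closed expressions (\ref{eq:s1}) and (\ref{eq:s2}), expand the trigonometric functions in their Taylor series, apply the multinomial theorem, and then read off the coefficient of $\alpha^{2N}$. Throughout I would write $\hat B_{\pm\alpha}\varphi-\varphi=\sum_{\nu\ge1}A_\nu(\pm\alpha)^\nu$, using $A_0=\varphi$, and abbreviate $T_{\pm\alpha}:=\tfrac12 a\sum_{\nu\ge1}A_\nu(\pm\alpha)^\nu$, so that $\tfrac12 a(\varphi+\hat B_{\pm\alpha}\varphi)=a\varphi+T_{\pm\alpha}$ and $\tfrac12 a(\varphi-\hat B_{\pm\alpha}\varphi)=-T_{\pm\alpha}$. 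In each power $T_{\pm\alpha}^{k}=(\tfrac12 a)^k\big(\sum_{\nu\ge1}A_\nu(\pm\alpha)^\nu\big)^k$ the multinomial theorem produces monomials $\prod_\nu A_\nu^{n_\nu}$ carrying the sign $(\pm1)^{\sum_\nu\nu n_\nu}$ and the power $\alpha^{\sum_\nu\nu n_\nu}$, while the total exponent $\sum_\nu n_\nu=k$ records the order of the trigonometric term.

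For (\ref{eq:s1}) I would apply the angle–addition identity to get $s_1^{(\alpha)}=\cos(a\varphi)\big(\cos T_\alpha+\cos T_{-\alpha}\big)-\sin(a\varphi)\big(\sin T_\alpha+\sin T_{-\alpha}\big)$. Adding the $+\alpha$ and $-\alpha$ contributions produces the factor $1+(-1)^{\sum\nu n_\nu}$, which vanishes on odd total powers and equals $2$ on even ones, so extracting $\alpha^{2N}$ amounts to imposing $\sum_\nu\nu n_\nu=2N$. The even powers of $T$ (with $\sum n_\nu=2\beta$) feed the $\cos(a\varphi)$ term and the odd powers ($\sum n_\nu=2\beta+1$) the $\sin(a\varphi)$ term, the sign $(-1)^{\beta+1}=-(-1)^\beta$ in the latter coming from the overall minus in front of $\sin(a\varphi)$; this reproduces (\ref{eq:s1_final}). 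The ranges follow from the two constraints: since every $\nu\ge1$, one has $\sum n_\nu\le\sum\nu n_\nu=2N$, forcing $\beta\le N$ in the cosine sum and $\beta\le N-1$ in the sine sum, and the largest index with $n_\nu\neq0$ is at most $2N$. For $N\ge1$ the term $\beta=0$ would force all $n_\nu=0$ and hence $\sum\nu n_\nu=0\neq2N$, so it drops out and the cosine sum may be written from $\beta=1$ as stated; for $N=0$ one reads off directly $s_1^0=2\cos(a\varphi)$, and the sine sum, needing the odd value $\sum n_\nu=2\beta+1$, is empty, which is why that term is present only for $N\ge1$.

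The treatment of (\ref{eq:s2}) begins by noting that, cosine being even, $\cos\big[\tfrac12 a(\varphi-\hat B_{\pm\alpha}\varphi)\big]=\cos T_{\pm\alpha}$, so that $s_2^{(\alpha)}=\alpha^{-2}\big\{2-(\cos T_\alpha+\cos T_{-\alpha})\big\}$. The same expansion shows that the $\alpha^0$–term of $\cos T_\alpha+\cos T_{-\alpha}$ equals $2$ (the empty product at $\beta=0$), which cancels the leading $2$; the bracket therefore starts at order $\alpha^2$, and dividing by $\alpha^2$ shifts the grading by one, so that the coefficient of $\alpha^{2N}$ now collects the terms with $\sum_\nu\nu n_\nu=2(N+1)$ and total exponent $\sum_\nu n_\nu=2\beta$, $\beta\ge1$, with prefactor $-2(-1)^\beta(\tfrac12 a)^{2\beta}$.

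The one genuinely fiddly point, and the step I would check most carefully, is the relabeling built into (\ref{eq:s2_final}): there the exponent of $A_{j+1}$ is written $n_j$, the summation index is $\mu$, and the constraints read $\sum_j n_j=2(\mu+1)$ and $\sum_{j\ge1}j\,n_j=2(N-\mu)$. Setting $\beta=\mu+1$ and $\nu=j+1$, I would verify that the degree constraint becomes $\sum_\nu\nu n_\nu=\sum_j(j+1)n_j=\sum_{j\ge1}j\,n_j+\sum_j n_j=2(N-\mu)+2(\mu+1)=2(N+1)$, exactly as required, and that the prefactor $-2(-1)^\beta(\tfrac12 a)^{2\beta}$ turns into $2(-1)^\mu(\tfrac12 a)^{2(\mu+1)}$. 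The upper limits are then pinned down by the same two constraints: the total exponent $2(\mu+1)$ together with $\sum_\nu\nu n_\nu=2(N+1)$ forces $\mu$ to range from $0$ to $N$ and the largest admissible $A$–index to be $2(N-\mu)+1$, attained by placing one factor there and the remaining $2\mu+1$ factors on $A_1$. Once this bookkeeping is in place the two formulas drop out, and no analytic input beyond the formal power–series manipulations is needed.
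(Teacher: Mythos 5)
Your proof is correct, and its skeleton is the same as the paper's: substitute (\ref{eq:powA}) into (\ref{eq:s1})--(\ref{eq:s2}), Taylor-expand the trigonometric functions, apply the multinomial theorem, and read off the coefficient of $\alpha^{2N}$, with the same parity argument eliminating odd powers and the same index bookkeeping fixing the ranges of $\beta$ (resp.\ $\mu$) and of the admissible $A$-indices. Where you genuinely diverge is in how the prefactors $\cos(a\varphi)$ and $\sin(a\varphi)$ are produced. The paper does not invoke the angle-addition identity: it sets $A_0^+=2\varphi$, $A_\nu^+=A_\nu$ and expands $\cos\big[\tfrac12 a\sum_{\nu\ge 0}A_\nu^+\alpha^\nu\big]$ with $A_0^+$ carried along as a multinomial variable; it then fixes the number $\beta$ of non-$A_0^+$ factors, splits $\beta$ into even and odd, and \emph{resums} the residual series in powers of $A_0^+$ to recognize $\cos(a\varphi)$ and $\sin(a\varphi)$. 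You instead write $\tfrac12 a(\varphi+\hat B_{\pm\alpha}\varphi)=a\varphi+T_{\pm\alpha}$ and apply the addition formula at the outset, so the two prefactors are split off before any expansion and only the series $\cos T_{\pm\alpha}$, $\sin T_{\pm\alpha}$ (which contain no constant configuration term) need the multinomial treatment; similarly, for $s_2^N$ you exploit evenness of cosine, whereas the paper introduces $A_0^-=0$, $A_\nu^-=-A_\nu$ and cancels the signs through the even total exponent $2(\mu+1)$. The two routes are mathematically equivalent --- the addition formula is exactly what the paper's resummation re-derives in formal-power-series form --- but yours trims the combinatorics (no extraction of the exponent $n_0$, no rescaling of the $\mu$-summation), at the price of the small extra observations that the constant $2$ in (\ref{eq:s2}) is cancelled by the $\alpha^0$ term and that division by $\alpha^2$ shifts the grading by one; your final relabeling $\beta=\mu+1$, $\nu=j+1$ then lands exactly on the index conventions of (\ref{eq:s2_final}).
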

\begin{proof} 
The proof is given in Appendix \ref{app:proof_prop_cons_curr}.
\end{proof}
The expressions of the first components, from formulas (\ref{eq:s1_final}) and (\ref{eq:s2_final}), are:
\begin{equation*}
\begin{cases}
s_1^0=2\cos(a\varphi), \\
s_2^0=\varphi_\xi^2,
\end{cases}
\qquad
\begin{cases}
s_1^1=-\varphi_\xi^2\cos(a\varphi)-\frac{2}{a}\varphi_{\xi\xi}\sin(a\varphi), \\
s_2^1=\frac{1}{4}\varphi_\xi^4+\frac{2}{a^2}\varphi_\xi\varphi_{\xi\xi\xi}+\frac{1}{a^2}\varphi_{\xi\xi}^2.
\end{cases}
\end{equation*}

\begin{rem}
\label{rem:comp poly}
Considering Remark \ref{rem:all poly} together with formulas (\ref{eq:s1_final}) and (\ref{eq:s2_final}), it follows that the coefficients of $\cos(a\varphi)$ and of $\sin(a\varphi)$ in the expression of $s_1^N$, and the second components $s_2^N$ are all polynomials in the derivatives of the configuration $\varphi$ with respect to the coordinate $\xi$.
\end{rem}

\begin{rem}
Again, as a consistency check, we have that we recover the expressions for the components of the conserved currents given in \cite{Steu76} from our expressions setting $a=1$.
\end{rem}

To conclude this Section, we study the properties of the degree of the components of the higher conserved currents.
\begin{prop}
\label{prop:deg_cons_curr}
Assign by convention degree equal to $0$ to $\cos(a\varphi)$ and $\sin(a\varphi)$. Then we have that:
\begin{itemize}
\item The first component $s_1^N$ of the conserved current $s^N$ is homogeneous of degree equal to $2N$.
\item The second component $s_2^N$ of the conserved current $s^N$ is homogeneous of degree equal to $2(N+1)$.
\end{itemize}
\end{prop}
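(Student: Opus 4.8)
The plan is to read off the degree directly from the closed expressions in Proposition~\ref{prop:comp_cons_curr}, using three ingredients: the additivity of the degree (Definition~\ref{defn:deg_A}), the fact that $\deg(A_\nu)=\nu$ for all $\nu\ge 0$ (Proposition~\ref{prop:deg_A}), and the stated convention $\deg(\cos(a\varphi))=\deg(\sin(a\varphi))=0$. Since every summand in \eqref{eq:s1_final} and \eqref{eq:s2_final} is a scalar multiple of a monomial in the $A_k$'s times a trigonometric factor, it suffices to check that each such monomial has one and the same degree; the homogeneity claim then follows because a sum of monomials all of equal degree is homogeneous.

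For the first component I would treat the $\cos(a\varphi)$ and $\sin(a\varphi)$ brackets in \eqref{eq:s1_final} in parallel. A generic monomial appearing in either bracket has the form $A_1^{n_1}\cdots A_{2N}^{n_{2N}}$, so by additivity and $\deg(A_k)=k$ its degree equals $\sum_{k=1}^{2N} k\,n_k$. The summation constraint $1\cdot n_1+\dots+2N\cdot n_{2N}=2N$, which is present in both brackets, forces this to be exactly $2N$. Multiplying by $\cos(a\varphi)$ or $\sin(a\varphi)$, which carry degree $0$ by convention, leaves the degree unchanged. Hence every monomial of $s_1^N$ has degree $2N$, and $s_1^N$ is homogeneous of degree $2N$.

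For the second component the only subtlety is the index shift in the product $A_1^{n_0}\cdots A_{2(N-\mu)+1}^{n_{2(N-\mu)}}$ in \eqref{eq:s2_final}: here $A_{j+1}$ is raised to the power $n_j$, so the degree of the monomial is $\sum_{j=0}^{2(N-\mu)}(j+1)\,n_j$. I would split this as $\sum_j j\,n_j+\sum_j n_j$ and match the two pieces against the two constraints separately: the first sum equals $2(N-\mu)$ by the constraint $1\cdot n_1+\dots+2(N-\mu)\cdot n_{2(N-\mu)}=2(N-\mu)$ (the $j=0$ term contributing nothing), while the second sum equals $2(\mu+1)$ by the constraint $n_0+\dots+n_{2(N-\mu)}=2(\mu+1)$. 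Their sum is $2(N-\mu)+2(\mu+1)=2(N+1)$, independent of $\mu$, so every monomial in $s_2^N$ has degree $2(N+1)$ and the component is homogeneous of that degree.

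There is no genuine obstacle here beyond bookkeeping, since the essential content has already been established in $\deg(A_\nu)=\nu$. The one place demanding care is the off-by-one indexing in the $s_2^N$ product, which must be unwound correctly so that the two index constraints line up with the two pieces of the degree sum; getting that shift right is exactly what makes the two constraints combine to the $\mu$-independent value $2(N+1)$.
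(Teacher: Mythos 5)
Your proof is correct and follows essentially the same route as the paper's: both read off the degree of each monomial $A_1^{n_1}\cdots A_{2N}^{n_{2N}}$ (resp.\ $A_1^{n_0}\cdots A_{2(N-\mu)+1}^{n_{2(N-\mu)}}$) from Proposition~\ref{prop:deg_A} and additivity, then split the degree sum against the two index constraints to get the $\mu$-independent values $2N$ and $2(N+1)$. No substantive difference.
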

\begin{proof}
The first claim follows from the observation that the coefficients of $\cos(a\varphi)$ and $\sin(a\varphi)$, in formula (\ref{eq:s1_final}), are given by sums of products of the form
\begin{equation*}
A_1^{n_1}\cdots A_{2N}^{n_{2N}},
\end{equation*}
with the condition $n_1+\dots+2N\cdot n_{2N}=2N$. All these products have degree
\begin{equation*}
\text{deg}(A_1^{n_1}\dots A_{2N}^{n_{2N}})=1\cdot n_1+\dots+2N\cdot n_{2N}=2N.
\end{equation*}

As for the second claim, from formula (\ref{eq:s2_final}) we have that $s_2^N$ is given by a finite sum of products of the form
\begin{equation*}
A_1^{n_0}\cdots A_{2(N-\mu)+1}^{n_{2(N-\mu)}},
\end{equation*}
with the conditions 
\begin{equation*}
\begin{split}
n_0+\dots+n_{2(N-\mu)}&=2(\mu+1) \\
1\cdot n_1+\dots 2(N-\mu)\cdot n_{2(N-\mu)}&=2(N-\mu).
\end{split}
\end{equation*}
The degree of each one of these products is
\begin{equation*}
\begin{split}
\text{deg}\big(A_1^{n_0}\dots A_{2(N-\mu)+1}^{n_{2(N-\mu)}}\big)&=n_0+\dots+(2(N-\mu)+1)n_{2(N-\mu)}= \\
&=n_0+n_1+\dots+n_{2(N-\mu)}+ \\
&\quad +n_1+\dots+2(N-\mu)n_{2(N-\mu)}= \\
&=2(\mu+1)+2(N-\mu)= \\
&=2(N+1).
\end{split}
\end{equation*}
\end{proof}

\section{Unrenormalized expressions for the interacting higher currents in pAQFT}
\label{sec:unrenorm}

Before discussing the technical details, we recall some basic notions regarding the framework of pAQFT. In particular, we restrict to the specific setting of the sine-Gordon model (for more extensive and general treatments, we refer for example to \cite{BF}, \cite{D}, \cite{HollWald01}, \cite{R}).

Fields, also called observables, are described by a class of smooth functionals $F\colon C^\infty(\mathbb{M}_2)\rightarrow \mathbb{C}$, called microcausal functionals and denoted by $\mathcal{F}_{\mu c}$. More generally, in adherence with the perturbative approach, one considers formal power series in $\hbar$ with coefficients in microcausal functionals, denoted by $\mathcal{F}_{\mu c}[[\hbar]]$.

The space of fields is endowed with a non commutative product
\begin{equation*}
F\star G=\sum_{n=0}^\infty\frac{\hbar}{n!}\braket{\Big(\frac{\delta^n}{\delta\varphi^n}F\Big),\,(W)^n\,\frac{\delta^n}{\delta\varphi^n}G\,},\qquad F,G\in\mathcal{F}_{\mu c}[[\hbar]],
\end{equation*}
where $\frac{\delta^n}{\delta\varphi^n}F,\frac{\delta^n}{\delta\varphi^n}G$ denote the $n$-th functional derivatives of the fields $F$ and $G$ and the bidistribution $W$ is the Wightman two-point function. This product defines on $\mathcal{F}_{\mu c}[[\hbar]]$ the structure of a Poisson $\ast$-algebra, where the Poisson bracket is given by the commutator with respect to the star product and the involution $^\ast$ by complex conjugation. The resulting Poisson $\ast$-algebra $(\mathcal{F}_{\mu c}[[\hbar]],\star,[\cdot,\cdot]_\star,^\ast)$ is called the algebra of free fields.

While the algebra of free fields represents the model algebra for observables, the physical concept of evolution is encoded by the notion of interacting fields. Roughly speaking, considering interacting fields represents the quantum equivalent of the classical restriction to on-shell fields, namely observables evaluated only on configurations that are solutions of the Euler-Lagrange equations of the theory. This is precisely the case for the higher currents of the sine-Gordon model, which are conserved only when evaluated on a configuration that is a solution of the sine-Gordon equation.

In order to construct the interacting fields, first the time-ordered product is defined as the following commutative product of fields $F_1,F_2\in\mathcal{F}_{\mu c}[[\hbar]]$:
\begin{equation*}
T_2\big(F_1\otimes F_2\big)=F_1\star_FF_2=\sum_{n=0}^\infty\frac{\hbar}{n!}\braket{\Big(\frac{\delta^n}{\delta\varphi^n}F_1\Big),\,(\Delta^F)^n\,\frac{\delta^n}{\delta\varphi^n}F_2\,},
\end{equation*}
where $\Delta^F$ is the unique Feynman propagator (we remark that in general, on curved spacetimes, the Feynman propagator is not unique). The time-ordered product of order $l$ is then:
\begin{equation*}
T_l\big(F_1\otimes\dots\otimes F_l\big):=F_1\star_F\dots\star_FF_l,\qquad F_1,\dots,F_l\in\mathcal{F}_{\mu c}[[\hbar]].
\end{equation*}

In the formula for the time-ordered product, the Feynman propagator is intended as a symmetric bidistribution defined on $\mathbb{M}_2^2$. Its wavefront set is such that the product of Feynman propagators can be defined using H\"ormander's sufficient criterion (\cite{Horm}) only outside of the diagonal. This implies that the time-ordered products $T_l\big(F_1(x_1)\otimes\dots\otimes F_l(x_l)\big)$, seen as observable-valued distributions, can be defined by H\"ormander's sufficient criterion only on a subset of $\mathbb{M}_2^l$ denoted by
\begin{equation}
\label{eq:M check}
\check{\mathbb{M}}_2^{l}:=\Set{(x_1,\dots,x_{l})\in\mathbb{M}_2^{l} | x_i\neq x_j\quad\forall 1\le i<j\le l}.
\end{equation}

The renormalization problem in pAQFT is the problem of extending these products to distributions well-defined on the whole $\mathbb{M}_2^l$. This can be done by combining a study of the properties of their wavefront set with the notion of Steinmann scaling degree (\cite{Stein}). The extension process is not always unique. The ambiguities are represented by the possibility to add derivatives of Dirac deltas up to a certain order. This renormalization framework goes under the name of Epstein-Glaser renormalization. We also remark that the pAQFT setting is defined not only on Minkowski spacetime, but applies naturally to globally hyperbolic spacetimes.

The main ingredient of the interacting picture in pAQFT is the scattering matrix $S$, defined as the generating function of the time-ordered products:
\begin{equation*}
S(F):=T\big(e_\otimes^{iF/\hbar}\big):=\sum_{n=0}^\infty\frac{1}{n!}\Big(\frac{i}{\hbar}\Big)^nT_n\big(F^{\otimes n}\big),\qquad F\in\mathcal{F}_{\mu c}[[\hbar]], 
\end{equation*}
where $T_0\big(F\big)=1$ and $T_1\big(F\big)=F$. Interacting fields $(F)_{\text{int}}$ can then be constructed by means of the Bogoliubov formula
\footnote{The idea of using Bogoliubov formula to compute the interacting components of the higher currents for the sine-Gordon model was presented by the author at the 46th LQP Workshop, Erlangen, 24\--25 June 2022. Other approaches are possible. In particular, for considerations on the stress-energy tensor in a different framework, see \cite{CadFrob} and \cite{CadFrob2}.}:
\begin{equation*}
(F)_{\text{int}}:=-i\hbar\frac{d}{d\lambda}\bigg(S(L_{\text{int}})^{\star-1}\star S(L_{\text{int}}+\lambda F)\bigg)\bigg|_{\lambda=0},
\end{equation*}
where $L_{\text{int}}$ is the interaction Lagrangian of the system under consideration. In particular for the sine-Gordon model we have $L_{\text{int}}=\cos(a\varphi)$.

The result has to be intended as a formal power series in $\hbar$ (and in the coupling constant contained in $L_{\text{int}}$) and is denoted by:
\begin{equation*}
(F)_{\text{int}}=\sum_{n=0}^\infty\frac{1}{n!}R_n\big(L_{\text{int}}^{\otimes n},F\big).
\end{equation*}
The coefficients of the series are called retarded products and are given by
\begin{equation}
\label{eq:ret field model}
R_t\big(L_{\text{int}}^{\otimes t},F\big)=\Big(\frac{i}{\hbar}\Big)^t\sum_{l=0}^t\binom{t}{l}(-1)^{t-l}\bar{T}_{t-l}\big(L_{\text{int}}^{\otimes(t-l)}\big)\star T_{l+1}\big(L_{\text{int}}^{\otimes l}\otimes F\big),
\end{equation}
where $\bar{T}_n$ are the anti-chronological products, defined as the coefficients of the inverse (in the sense of formal power series) of the scattering matrix
\begin{equation*}
S(F)^{\star-1}=\sum_{n=0}^\infty\frac{1}{n!}\Big(\frac{-i}{\hbar}\Big)^n\bar{T}_n\big(F^{\otimes n}\big).
\end{equation*}

\subsection{Time-ordered products for components $s^N_2$}
\label{sec:URtos2}

We start the discussion with the components $s^N_2$ because, as pointed out in Remark \ref{rem:comp poly}, these are polynomials in the derivatives of the configuration $\varphi$. Thanks to this fact, we can further manipulate the unrenormalized expression of the $(l+1)$-th time-ordered products occurring in formula (\ref{eq:ret field model}).
\begin{prop}
\label{prop:to-int-s2}
The unrenormalized $(l+1)$-th time-ordered product for the components $s^N_2$ can be written as a finite sum of terms:
\begin{equation}
\label{eq:to_int_s2}
\begin{split}
&T_{l+1}\Big(L_{\text{int}}^{\otimes l}\otimes s^N_2\Big)= \\
&=\sum_{j=0}^{2(N+1)}\hbar^j\sum_{\substack{j_1,\dots,j_l\ge 0 \\ j_1+\dots +j_l=j}}\frac{1}{j_1!\dots j_l!}(\Delta^F)^jT_l\Big(\frac{\delta^{j_1}}{\delta\varphi^{j_1}}L_{\text{int}}\otimes\dots\otimes\frac{\delta^{j_l}}{\delta\varphi^{j_l}}L_{\text{int}}\Big)\frac{\delta^j}{\delta\varphi^j}s^N_2. 
\end{split}
\end{equation}
\end{prop}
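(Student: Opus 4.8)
The plan is to exploit the commutativity and associativity of the time-ordered product $\star_F$ (both inherited from the symmetry of the Feynman propagator $\Delta^F$) in order to single out $s^N_2$ as the last factor. First I would write
\begin{equation*}
T_{l+1}\big(L_{\text{int}}^{\otimes l}\otimes s^N_2\big)=T_l\big(L_{\text{int}}^{\otimes l}\big)\star_F s^N_2,
\end{equation*}
and then apply the very definition of $\star_F$ to the right-hand side, obtaining the expansion
\begin{equation*}
T_l\big(L_{\text{int}}^{\otimes l}\big)\star_F s^N_2=\sum_{j=0}^\infty\frac{\hbar^j}{j!}\Big\langle\frac{\delta^j}{\delta\varphi^j}T_l\big(L_{\text{int}}^{\otimes l}\big),\,(\Delta^F)^j\,\frac{\delta^j}{\delta\varphi^j}s^N_2\Big\rangle.
\end{equation*}
The task then splits into two essentially independent points: rewriting the functional derivative of the time-ordered product $T_l(L_{\text{int}}^{\otimes l})$ in terms of derivatives of its individual factors, and bounding the range of the summation index $j$.

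For the first point I would establish a multinomial Leibniz rule for the functional derivatives of an $l$-fold time-ordered product, namely
\begin{equation*}
\frac{\delta^j}{\delta\varphi^j}T_l\big(L_{\text{int}}^{\otimes l}\big)=\sum_{\substack{j_1,\dots,j_l\ge 0\\ j_1+\dots+j_l=j}}\frac{j!}{j_1!\cdots j_l!}\,T_l\Big(\frac{\delta^{j_1}}{\delta\varphi^{j_1}}L_{\text{int}}\otimes\dots\otimes\frac{\delta^{j_l}}{\delta\varphi^{j_l}}L_{\text{int}}\Big).
\end{equation*}
The observation that makes this work is that $\Delta^F$ does not depend on the configuration $\varphi$, so that $\frac{\delta}{\delta\varphi}$ commutes with every contraction and acts as a derivation with respect to $\star_F$; the first-order identity $\frac{\delta}{\delta\varphi}(F\star_F G)=(\frac{\delta}{\delta\varphi}F)\star_F G+F\star_F(\frac{\delta}{\delta\varphi}G)$ then propagates by a straightforward induction on $l$ and on the differentiation order $j$, the multinomial coefficients arising in the usual way.

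For the second point I would invoke the polynomiality of $s^N_2$ (Remark \ref{rem:comp poly}) together with the degree estimate of Proposition \ref{prop:deg_cons_curr}. Since every monomial of $s^N_2$ is a product of factors $\varphi_{k\xi}$ each carrying at least one $\xi$-derivative (by Remark \ref{rem:all poly}, and because the products in formula (\ref{eq:s2_final}) involve only the coefficients $A_\nu$ with $\nu\ge 1$, so that no bare $\varphi$ occurs), its polynomial degree in $\varphi$ is bounded above by its degree $2(N+1)$. As each such factor is linear in $\varphi$, the derivative $\frac{\delta^j}{\delta\varphi^j}s^N_2$ vanishes identically for $j>2(N+1)$, which truncates the sum over $j$ precisely at $2(N+1)$. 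Substituting the Leibniz expansion, cancelling the factor $j!$ against $\frac{1}{j!}$, and reading the pairing $\langle\,\cdot\,,(\Delta^F)^j\,\cdot\,\rangle$ in the shorthand of the statement then yields formula (\ref{eq:to_int_s2}).

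I expect the combinatorial bookkeeping of the Leibniz rule to be the most delicate step: one has to verify that the $j$ free functional-derivative legs distributed among the $l$ factors of $T_l$ are exactly those contracted against $s^N_2$ through $(\Delta^F)^j$, and that they never interfere with the internal $\Delta^F$-contractions already present inside $T_l$. The degree truncation, by contrast, is where the notion of degree introduced in Definition \ref{defn:deg_A} pays off, and it is precisely the polynomial nature of $s^N_2$ --- in contrast to the components $s^N_1$, which contain $\cos(a\varphi)$ and $\sin(a\varphi)$ --- that guarantees the finiteness of the sum.
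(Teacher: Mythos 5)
Your proposal is correct and is essentially the paper's own argument in different clothing: writing $T_{l+1}\big(L_{\text{int}}^{\otimes l}\otimes s^N_2\big)=T_l\big(L_{\text{int}}^{\otimes l}\big)\star_F s^N_2$ and invoking the Leibniz rule for $\frac{\delta}{\delta\varphi}$ (a $\star_F$-derivation because $\Delta^F$ is $\varphi$-independent) is precisely the paper's splitting of the exponential, $e^{\hbar\sum_{1\le i<j\le l+1}D^{ij}_F}=e^{\hbar\sum_{1\le i<j\le l}D^{ij}_F}\circ e^{\hbar\sum_{i=1}^lD^{i\,l+1}_F}$, followed by the same multinomial expansion. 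The truncation at $j=2(N+1)$ via polynomiality and the degree bound (Remark \ref{rem:comp poly}, Proposition \ref{prop:deg_cons_curr}) also coincides with the paper's; your explicit remark that no bare $\varphi$ factor occurs in $s^N_2$ (so the number of factors in each monomial is bounded by its degree) is a detail the paper leaves implicit, and it is indeed needed for the truncation argument.
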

\begin{proof}
The result can be obtained using the formula for the time-ordered product from \cite{BR18} and splitting the exponential in the following way: 
\begin{equation*}
\begin{split}
T_{l+1}\Big(L_{\text{int}}^{\otimes l}\otimes s^N_2\Big)&=\mu\circ e^{\hbar\sum_{1\le i<j\le l+1}D^{ij}_F}\Big(L_{\text{int}}^{\otimes l}\otimes s^N_2\Big)= \\
&=\mu\circ e^{\hbar\sum_{1\le i<j\le l}D^{ij}_F}\circ e^{\hbar\sum_{i=1}^lD^{i\,l+1}_F}\Big(L_{\text{int}}^{\otimes l}\otimes s^N_2\Big),
\end{split}
\end{equation*}
where
\begin{equation*}
D^{ij}_F:=\langle\Delta^F,\,\frac{\delta}{\delta\varphi_i}\otimes\frac{\delta}{\delta\varphi_j}\rangle
\end{equation*}
and the index $i$ in $\frac{\delta}{\delta\varphi_i}$ means that the functional derivative is applied to the $i$-th term of the tensor product.

From Remark \ref{rem:comp poly} and Proposition \ref{prop:deg_cons_curr} it follows that, as a field, $s^N_2$ admits non-zero functional derivatives of order at most $2(N+1)$. Hence the second exponential series is in fact a finite sum:
\begin{equation}
\label{eq:to1}
e^{\hbar\sum_{i=1}^lD^{i\,l+1}_F}\Big(L_{\text{int}}^{\otimes l}\otimes s^N_2\Big)=\sum_{j=0}^{2(N+1)}\frac{\hbar^j}{j!}\Big(\sum_{i=1}^lD^{i\,l+1}_F\Big)^j\Big(L_{\text{int}}^{\otimes l}\otimes s^N_2\Big)
\end{equation}

Expanding the operators 
\begin{equation*}
\Big(\sum_{i=1}^lD^{i\,l+1}_F\Big)^j=\langle(\Delta^F)^j,\,\Big(\frac{\delta}{\delta\varphi_1}+\dots +\frac{\delta}{\delta\varphi_l}\Big)^j\otimes\frac{\delta^j}{\delta\varphi_{l+1}^j}\rangle,
\end{equation*}
using the multinomial formula
\begin{equation*}
\Big(\frac{\delta}{\delta\varphi_1}+\dots +\frac{\delta}{\delta\varphi_l}\Big)^j=\sum_{\substack{j_1,\dots,j_l\ge 0 \\ j_1+\dots +j_l=j}}\frac{j!}{j_1!\dots j_l!}\prod_{t=1}^l\frac{\delta^{j_t}}{\delta\varphi_t^{j_t}},
\end{equation*}
and carrying out all the computations, we finally arrived at the aimed result.
\end{proof}

\subsection{Retarded components $s^N_2$}
We can use Proposition \ref{prop:to-int-s2} to make more explicit the expression of the unrenormalized retarded product of order $t$ for components $s^N_2$:
\begin{equation}
\label{eq:rets2_0}
R_t\Big(L_{\text{int}}^{\otimes t}\otimes s^N_2\Big)=\Big(\frac{i}{\hbar}\Big)^t\sum_{l=0}^t\binom{t}{l}(-1)^{t-l}\bar{T}_{t-l}\Big(L_{\text{int}}^{\otimes (t-l)}\Big)\star T_{l+1}\Big(L_{\text{int}}^{\otimes l}\otimes s^N_2\Big).
\end{equation}

First we study the $\star$-products $\bar{T}_{t-l}\Big(L_{\text{int}}^{\otimes (t-l)}\Big)\star T_{l+1}\Big(L_{\text{int}}^{\otimes l}\otimes s^N_2\Big)$ which, according to formula (\ref{eq:to_int_s2}), are in turn composed by terms of the form
\begin{equation}
\label{eq:prod}
\bar{T}_{t-l}\Big(L_{\text{int}}^{\otimes (t-l)}\Big)\star\Big(T_l\Big(\frac{\delta^{j_1}}{\delta\varphi^{j_1}}L_{\text{int}}\otimes\dots\otimes\frac{\delta^{j_l}}{\delta\varphi^{j_l}}L_{\text{int}}\Big)\frac{\delta^j}{\delta\varphi^j}s^N_2\Big).
\end{equation}
We can prove the following slightly more general technical result that applies in particular to formula (\ref{eq:prod}).
\begin{prop}
\label{prop:prod_fields}
Consider the product of fields $A,B,C\in\mathcal{F}_{\mu c}[[\hbar]]$
\begin{equation*}
A\star(B\,C),
\end{equation*}
where $C$ is such that $\exists\,c\in\mathbb{N}$ for which $\frac{\delta^i}{\delta\varphi^i}C=0$ whenever $i>c$, while $A$ and $B$ can possibly admit non-zero functional derivatives of arbitrary order. Then the product can be written in the form:
\begin{equation}
\label{eq:starprod}
A\star(B\,C)=\sum_{k=0}^{c}\frac{\hbar^k}{k!}\bigg(\sum_{n=0}^\infty\frac{\hbar^n}{n!}\bigg(\frac{\delta^k}{\delta\varphi^k}\frac{\delta^n}{\delta\varphi^n}A\bigg)(W)^n\frac{\delta^n}{\delta\varphi^n}B\bigg)(W)^k\frac{\delta^k}{\delta\varphi^k}C.
\end{equation}
\end{prop}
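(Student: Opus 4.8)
The plan is to expand $A\star(B\,C)$ directly from the definition of the star product and then reorganize the sum by distributing the functional derivatives over the pointwise product $B\,C$ via the Leibniz rule. Writing $A^{(m)}$ for $\frac{\delta^m}{\delta\varphi^m}A$ and similarly for $B,C$, I would first record
\begin{equation*}
A\star(B\,C)=\sum_{m=0}^\infty\frac{\hbar^m}{m!}\Big\langle A^{(m)},\,(W)^m\,\frac{\delta^m}{\delta\varphi^m}(B\,C)\Big\rangle,
\end{equation*}
and then apply the functional Leibniz rule
\begin{equation*}
\frac{\delta^m}{\delta\varphi^m}(B\,C)=\sum_{k=0}^m\binom{m}{k}\,B^{(m-k)}\;C^{(k)},
\end{equation*}
understood as an identity of symmetric distributional kernels in the $m$ arguments, the binomial coefficient counting the ways of assigning the $m$ functional variables to $B$ and to $C$.

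The next step is to split the $m$-fold contraction against $(W)^m$. Because $A^{(m)}$ is symmetric in its arguments and the product of $m$ copies of $W$ treats all slots on an equal footing, inserting the Leibniz expansion lets the contraction separate so that $m-k$ of the propagators connect $A$ to $B$ and the remaining $k$ connect $A$ to $C$; the symmetry is precisely what collapses the $\binom{m}{k}$ identical assignments into a single term. The weight of the term with $k$ derivatives landing on $C$ then recombines as
\begin{equation*}
\frac{\hbar^m}{m!}\binom{m}{k}=\frac{\hbar^k}{k!}\,\frac{\hbar^{m-k}}{(m-k)!}.
\end{equation*}
Renaming $n:=m-k$ and exchanging the order of summation factors the resulting double sum over $k$ and $n$ into exactly the nested shape of (\ref{eq:starprod}): the inner sum over $n$ gathers all derivatives of $A$ and $B$ contracted through $n$ copies of $W$, while the outer index $k$ records the derivatives of $A$ contracted with $C$ through $k$ copies of $W$, using $\frac{\delta^k}{\delta\varphi^k}\frac{\delta^n}{\delta\varphi^n}A=A^{(k+n)}$.

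Finally, the hypothesis $\frac{\delta^i}{\delta\varphi^i}C=0$ for $i>c$ truncates the outer sum at $k=c$, yielding the stated finite range. I expect the main obstacle to be purely combinatorial bookkeeping: one must check carefully that the symmetrized pairing $\big\langle A^{(m)},(W)^m(B^{(m-k)}\,C^{(k)})\big\rangle$ genuinely equals the nested contraction in which the $B$-legs and the $C$-legs are contracted separately. This rests entirely on the invariance of $A^{(m)}$ and of the $m$-fold product of $W$ under permutations of the arguments, which is what justifies both the collapse of the $\binom{m}{k}$ subsets and its recombination with $\frac{\hbar^m}{m!}$ as above. Since everything is a formal power series in $\hbar$, no convergence question arises, as each power of $\hbar$ receives contributions from only finitely many pairs $(k,n)$.
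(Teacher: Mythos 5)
Your proposal is correct and follows essentially the same route as the paper's proof: expand the star product, apply the Leibniz rule to $\frac{\delta^m}{\delta\varphi^m}(B\,C)$, exchange the order of summation using the hypothesis on $C$ to truncate at $k=c$, and rescale the summation index. You are merely more explicit than the paper about the combinatorial identity $\frac{\hbar^m}{m!}\binom{m}{k}=\frac{\hbar^k}{k!}\frac{\hbar^{m-k}}{(m-k)!}$ and about the symmetry argument that lets the $(W)^m$-contraction split into separate $B$-legs and $C$-legs, both of which the paper leaves implicit.
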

\begin{proof}
The claim is obtained by explicit calculation, applying the Leibniz rule and exchanging the order of the summations as follows:
\begin{equation*}
\begin{split}
A\star(B\,C)&=\sum_{n=0}^\infty\frac{\hbar^n}{n!}\bigg(\frac{\delta^n}{\delta\varphi^n}A\bigg)(W)^n\frac{\delta^n}{\delta\varphi^n}(B\,C)= \\
&=\sum_{n=0}^\infty\frac{\hbar^n}{n!}\bigg(\frac{\delta^n}{\delta\varphi^n}A\bigg)(W)^n\bigg(\sum_{k=0}^n\binom{n}{k}\frac{\delta^{n-k}}{\delta\varphi^{n-k}}B\,\frac{\delta^k}{\delta\varphi^k}C\bigg)= \\
&=\sum_{k=0}^{c}\bigg(\sum_{n=k}^\infty\frac{\hbar^n}{n!}\binom{n}{k}\bigg(\frac{\delta^n}{\delta\varphi^n}A\bigg)(W)^n\frac{\delta^{n-k}}{\delta\varphi^{n-k}}B\bigg)\frac{\delta^k}{\delta\varphi^k}C.
\end{split}
\end{equation*}
Rescaling the index of the second summation $n\rightarrow n-k$ we conclude. 
\end{proof}
Substituting formula (\ref{eq:to_int_s2}) in the product $\bar{T}_{t-l}\Big(L_{\text{int}}^{\otimes (t-l)}\Big)\star T_{l+1}\Big(L_{\text{int}}^{\otimes l}\otimes s^N_2\Big)$ and applying Proposition \ref{prop:prod_fields}, we obtain:
\begin{equation*}
\begin{split}
&\bar{T}_{t-l}\Big(L_{\text{int}}^{\otimes (t-l)}\Big)\star T_{l+1}\Big(L_{\text{int}}^{\otimes l}\otimes s^N_2\Big)= \\
&=\sum_{j=0}^{2(N+1)}\hbar^j\sum_{\substack{j_1,\dots,j_l\ge 0 \\ j_1+\dots +j_l=j}}\frac{1}{j_1!\dots j_l!}\sum_{i=0}^{2(N+1)-j}\frac{\hbar^i}{i!}\cdot \\
&\cdot\bigg(\sum_{k=0}^\infty\frac{\hbar^k}{k!}\frac{\delta^i}{\delta\varphi^i}\frac{\delta^k}{\delta\varphi^k}\bar{T}_{t-l}\Big(L_{\text{int}}^{\otimes (t-l)}\Big)(W)^k\frac{\delta^k}{\delta\varphi^k} T_l\Big(\frac{\delta^{j_1}}{\delta\varphi^{j_1}}L_{\text{int}}\otimes\dots\otimes\frac{\delta^{j_l}}{\delta\varphi^{j_l}}L_{\text{int}}\Big)\bigg)\cdot \\
&\quad\cdot(\Delta^F)^j(W)^i\frac{\delta^i}{\delta\varphi^i}\frac{\delta^j}{\delta\varphi^j}s^N_2.
\end{split}
\end{equation*}

We can now plug this equation in formula (\ref{eq:rets2_0}) and finally arrive to the explicit expression for the unrenormalized retarded components $s^N_2$:
\begin{equation}
\label{eq:rets2}
\begin{split}
&R_t\Big(L_{\text{int}}^{\otimes t}\otimes s^N_2\Big)= \\
&=\Big(\frac{i}{\hbar}\Big)^t\sum_{l=0}^t\binom{t}{l}(-1)^{t-l}\sum_{j=0}^{2(N+1)}\sum_{i=0}^{2(N+1)-j}\frac{\hbar^j\hbar^i}{i!}\sum_{\substack{j_1,\dots,j_l\ge 0 \\ j_1+\dots +j_l=j}}\frac{1}{j_1!\dots j_l!}\cdot \\
&\cdot\bigg(\sum_{k=0}^\infty\frac{\hbar^k}{k!}\frac{\delta^i}{\delta\varphi^i}\frac{\delta^k}{\delta\varphi^k}\bar{T}_{t-l}\Big(L_{\text{int}}^{\otimes (t-l)}\Big)(W)^k\frac{\delta^k}{\delta\varphi^k} T_l\Big(\frac{\delta^{j_1}}{\delta\varphi^{j_1}}L_{\text{int}}\otimes\dots\otimes\frac{\delta^{j_l}}{\delta\varphi^{j_l}}L_{\text{int}}\Big)\bigg)\cdot \\
&\quad\cdot(\Delta^F)^j(W)^i\frac{\delta^i}{\delta\varphi^i}\frac{\delta^j}{\delta\varphi^j}s^N_2.
\end{split}
\end{equation}

\subsection{Time-ordered products for components $s^N_1$}
We know from formula (\ref{eq:s2_final}) that components $s^N_1$ are given by the sum of a homogeneous part of degree $2N$ multiplied by $\cos(a\varphi)$ and another homogeneous part of degree $2N$ multiplied by $\sin(a\varphi)$. We rename the two homogeneous parts $q^N_1$ and $r^N_1$, respectively, and write
\begin{equation}
\label{eq:decomposition s1}
s_1^N=\cos(a\varphi)q^N_1+\sin(a\varphi)r^N_1.
\end{equation}
By linearity of the time-ordered products, we have
\begin{equation}
\label{eq:tos1_0}
T_{l+1}\Big(L_{\text{int}}^{\otimes l}\otimes s^N_1\Big)=T_{l+1}\Big(L_{\text{int}}^{\otimes l}\otimes (\cos(a\varphi)q^N_1)\Big)+T_{l+1}\Big(L_{\text{int}}^{\otimes l}\otimes (\sin(a\varphi)r^N_1)\Big).
\end{equation} 

The two terms on the right hand side can be treated exactly in the same way, so we discuss only the second one. In this case it is not any longer true that the field $\sin(a\varphi)r^N_1$ admits non-zero functional derivatives only up to a finite order, so we cannot directly apply Proposition \ref{prop:to-int-s2}. Nevertheless we can prove a similar technical result.
\begin{prop}
\label{prop:toABC}
Consider a time-ordered product of the form
\begin{equation*}
T_{l+1}\Big(A^{\otimes l}\otimes (B\,C)\Big),\qquad A,B,C\in\mathcal{F}_{\mu c}[[\hbar]],
\end{equation*}
where $C$ is such that $\exists\,c\in\mathbb{N}$ for which $\frac{\delta^i}{\delta\varphi^i}C=0$ whenever $i>c$, while $A$ and $B$ can possibly admit non-zero functional derivatives of arbitrary order. Then the following equation hols: 
\begin{equation}
\label{eq:toABC}
\begin{split}
&T_{l+1}\big(A^{\otimes l}\otimes(B\,C)\big)= \\
&=\sum_{i=0}^{c}\hbar^i(\Delta^F)^i\sum_{\substack{i_1,\dots,i_l\ge 0 \\ i_1+\dots+i_l=i}}\frac{1}{i_1!\dots i_l!}T_{l+1}\Big(\frac{\delta^{i_1}}{\delta\varphi^{i_1}}A\otimes\dots\otimes\frac{\delta^{i_l}}{\delta\varphi^{i_l}}A\otimes B\Big)\,\frac{\delta^i}{\delta\varphi^i}C.
\end{split}
\end{equation}
\end{prop}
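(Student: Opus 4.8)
The plan is to mirror the proof of Proposition \ref{prop:to-int-s2}, starting again from the time-ordered product formula of \cite{BR18} and splitting off the contractions that involve the last tensor slot:
\[
T_{l+1}\big(A^{\otimes l}\otimes(B\,C)\big)=\mu\circ e^{\hbar\sum_{1\le i<j\le l}D^{ij}_F}\circ e^{\hbar\sum_{i=1}^l D^{i\,l+1}_F}\big(A^{\otimes l}\otimes B\,C\big).
\]
The new ingredient, compared with Proposition \ref{prop:to-int-s2}, is that the last factor is now a product $B\,C$ rather than a single field with finitely many non-zero derivatives. The key idea is therefore to resolve the functional derivatives acting on this slot by the Leibniz rule. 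Writing symbolically $\frac{\delta}{\delta\varphi_{l+1}}=\frac{\delta}{\delta\varphi_B}+\frac{\delta}{\delta\varphi_C}$, where the two summands differentiate $B$ and $C$ respectively, each cross contraction splits as
\[
D^{i\,l+1}_F=D^{iB}_F+D^{iC}_F,\qquad D^{iB}_F:=\langle\Delta^F,\tfrac{\delta}{\delta\varphi_i}\otimes\tfrac{\delta}{\delta\varphi_B}\rangle,\quad D^{iC}_F:=\langle\Delta^F,\tfrac{\delta}{\delta\varphi_i}\otimes\tfrac{\delta}{\delta\varphi_C}\rangle.
\]

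Since functional derivatives acting on distinct factors commute, higher derivatives within a single factor are symmetric, and every contraction is a constant-coefficient operator (multiplication by the fixed bidistribution $\Delta^F$ composed with functional derivatives), all the operators $D^{iB}_F$, $D^{jC}_F$ and $D^{ij}_F$ commute with one another. I can therefore factorize the cross exponential as $e^{\hbar\sum_i D^{i\,l+1}_F}=e^{\hbar\sum_i D^{iB}_F}\,e^{\hbar\sum_i D^{iC}_F}$ and expand the $C$-part on its own. Because $\frac{\delta^i}{\delta\varphi^i}C=0$ for $i>c$, this factor truncates at order $c$, and applying the multinomial theorem to $\big(\sum_{i=1}^l D^{iC}_F\big)^i$ (legitimate by commutativity) produces exactly the sum over $i_1+\dots+i_l=i$ with weights $\frac{1}{i_1!\cdots i_l!}$, the overall factor $\hbar^i(\Delta^F)^i$, the differentiated fields $\frac{\delta^{i_t}}{\delta\varphi^{i_t}}A$ in their respective slots, and the factor $\frac{\delta^i}{\delta\varphi^i}C$ pulled out front.

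It then remains to re-identify the surviving operator. Once the $C$-contractions have been carried out and $\frac{\delta^i}{\delta\varphi^i}C$ factored out, the field $C$ is frozen and what is left acting on $\frac{\delta^{i_1}}{\delta\varphi^{i_1}}A\otimes\dots\otimes\frac{\delta^{i_l}}{\delta\varphi^{i_l}}A\otimes B$ is precisely $\mu\circ e^{\hbar\sum_{1\le i<j\le l}D^{ij}_F}\,e^{\hbar\sum_{i=1}^l D^{iB}_F}$. Since $D^{iB}_F$ is nothing but the cross contraction $D^{i\,l+1}_F$ with $B$ now sitting in the last slot, the internal $A$--$A$ contractions together with the $A$--$B$ contractions reassemble into $\mu\circ e^{\hbar\sum_{1\le i<j\le l+1}D^{ij}_F}$ acting on these $l+1$ fields, i.e. into $T_{l+1}\big(\frac{\delta^{i_1}}{\delta\varphi^{i_1}}A\otimes\dots\otimes\frac{\delta^{i_l}}{\delta\varphi^{i_l}}A\otimes B\big)$; collecting the terms yields (\ref{eq:toABC}).

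The step I expect to require the most care is the factorization of the cross exponential via the Leibniz splitting and the subsequent recognition of the residual operator as a genuine order-$(l+1)$ time-ordered product: one has to verify that no spurious $B$--$C$ contraction is generated (there is none, since $B$ and $C$ occupy the same tensor slot and are only ever differentiated, never contracted against each other) and that the combinatorial weights coming from the multinomial expansion are exactly those appearing in (\ref{eq:toABC}). As a sanity check, the special case $B=1$ collapses the last slot to $C$ alone and recovers Proposition \ref{prop:to-int-s2}.
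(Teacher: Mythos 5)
Your proposal is correct and follows essentially the same route as the paper's proof: both start from the exponential formula of \cite{BR18}, use the Leibniz rule to separate the contractions hitting $C$ (which truncate at order $c$) from those hitting $B$, apply the multinomial expansion to get the weights $\frac{1}{i_1!\cdots i_l!}$, and reassemble the surviving operators into $T_{l+1}\big(\frac{\delta^{i_1}}{\delta\varphi^{i_1}}A\otimes\dots\otimes\frac{\delta^{i_l}}{\delta\varphi^{i_l}}A\otimes B\big)$. Your operator-level factorization $e^{\hbar\sum_i D^{i\,l+1}_F}=e^{\hbar\sum_i D^{iB}_F}\,e^{\hbar\sum_i D^{iC}_F}$ is merely a cleaner packaging of the Leibniz-rule manipulation the paper performs term by term, not a different argument.
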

\begin{proof}
We start from the formula for the time-ordered products as in the proof of Proposition \ref{prop:to-int-s2}:
\begin{equation*}
T_{l+1}\big(A^{\otimes l}\otimes(B\,C)\big)=\mu\circ e^{\hbar\sum_{1\le i<j\le l}D^{ij}_F}\circ e^{\hbar\sum_{i=1}^lD^{i\,l+1}_F}\big(A^{\otimes l}\otimes(B\,C)\big). 
\end{equation*}

The second exponential acts on the fields as:
\begin{equation*}
\label{eq:multinom+}
\begin{split}
&e^{\hbar\sum_{i=1}^lD^{i\,l+1}_F}\Big(A^{\otimes l}\otimes (B\,C)\Big)= \\
&=\sum_{j=0}^\infty\frac{\hbar^j}{j!}(\Delta^F)^j\big(\frac{\delta}{\delta\varphi_1}+\dots+\frac{\delta}{\delta\varphi_l}\big)^j\otimes\frac{\delta^j}{\delta\varphi_{l+1}^j}\Big(A^{\otimes l}\otimes(B\,C)\Big).
\end{split}
\end{equation*}
Using Leibniz rule and the fact that the field $C$ admits non-zero functional derivatives only up to order $c$, we can manipulate this expression to obtain:
\begin{equation*}
\begin{split}
&e^{\hbar\sum_{i=1}^lD^{i\,l+1}_F}\Big(A^{\otimes l}\otimes (B\,C)\Big)= \\
&=\sum_{i=0}^{c}\frac{\hbar^i}{i!}(\Delta^F)^i\big(\frac{\delta}{\delta\varphi_1}+\dots+\frac{\delta}{\delta\varphi_l}\big)^i\Big(e^{\hbar\sum_{i=1}^lD^{i\,l+1}_F}\big(A^{\otimes l}\otimes B\big)\Big)\otimes\frac{\delta^i}{\delta\varphi_{l+1}^i}C.
\end{split}
\end{equation*}
Applying the remaining operators, we arrive at the aimed result.
\end{proof}
In our case we have $A=L_{\text{int}}$, $B=\sin(a\varphi)$ and $C=r^N_1$. From Remark \ref{rem:comp poly} and Proposition \ref{prop:comp_cons_curr}, we know that functional derivatives of $r^N_1$ of order greater than $2N$ are all zero. We can hence apply Proposition \ref{prop:toABC} and obtain:
\begin{equation}
\label{eq:toVr1}
\begin{split}
&T_{l+1}\Big(L_{\text{int}}^{\otimes l}\otimes (\sin(a\varphi)\cdot r^N_1)\Big)= \\
&=\sum_{i=0}^{2N}\hbar^i(\Delta^F)^i\sum_{\substack{i_1,\dots,i_l\ge 0 \\ i_1+\dots+i_l=i}}\frac{1}{i_1!\dots i_l!}T_{l+1}\Big(\frac{\delta^{i_1}}{\delta\varphi^{i_1}}L_{\text{int}}\otimes\dots\otimes\frac{\delta^{i_l}}{\delta\varphi^{i_l}}L_{\text{int}}\otimes\sin(a\varphi)\Big)\cdot \\
&\qquad\cdot\frac{\delta^i}{\delta\varphi^i}r^N_1.
\end{split}
\end{equation}
A completely analogous expression occurs for term $T_{l+1}\Big(L_{\text{int}}^{\otimes l}\otimes (\cos(a\varphi)q^N_1)\Big)$ in equation (\ref{eq:tos1_0}), with $\cos(a\varphi)$ in place of $\sin(a\varphi)$ and $q^N_1$ in place of $r^N_1$.

\subsection{Retarded components $s^N_1$}

Using formula (\ref{eq:decomposition s1}) and linearity of the retarded products, we have
\begin{equation}
\label{eq:decomposition ret s1}
R_t\Big(L_{\text{int}}^{\otimes t}\otimes s^N_1\Big)=R_t\Big(L_{\text{int}}^{\otimes t}\otimes(\cos(a\varphi)q^N_1)\Big)+R_t\Big(L_{\text{int}}^{\otimes t}\otimes(\sin(a\varphi)r^N_1)\Big).
\end{equation}
The two terms on the right hand side are completely analogous, so we consider only the second one. By formula (\ref{eq:ret field model}), we can expand the retarded product as
\begin{equation}
\label{eq:retVr1}
\begin{split}
&R_t\Big(L_{\text{int}}^{\otimes t}\otimes(\sin(a\varphi)r^N_1)\Big)= \\
&=\Big(\frac{i}{\hbar}\Big)^t\sum_{l=0}^t\binom{t}{l}(-1)^{t-l}\bar{T}_{t-l}\Big(L_{\text{int}}^{\otimes (t-l)}\Big)\star T_{l+1}\Big(L_{\text{int}}^{\otimes l}\otimes(\sin(a\varphi)r^N_1)\Big).
\end{split}
\end{equation} 

First we consider the star products $\bar{T}_{t-l}\Big(L_{\text{int}}^{\otimes (t-l)}\Big)\star T_{l+1}\Big(L_{\text{int}}^{\otimes l}\otimes(\sin(a\varphi)r^N_1)\Big)$. We substitute equation (\ref{eq:toVr1}), renaming the index $i\rightarrow j$, and then apply Proposition \ref{prop:prod_fields} to every term to obtain:
\begin{equation*}
\begin{split}
&\bar{T}_{t-l}\Big(L_{\text{int}}^{\otimes (t-l)}\Big)\star T_{l+1}\Big(L_{\text{int}}^{\otimes l}\otimes(\sin(a\varphi)r^N_1)\Big)= \\
&=\sum_{j=0}^{2N}\hbar^j\sum_{\substack{j_1,\dots,j_l\ge 0 \\ j_1+\dots+j_l=j}}\frac{1}{j_1!\dots j_l!}\sum_{i=0}^{2N-j}\frac{\hbar^i}{i!}\Big(\sum_{k=0}^\infty\frac{\hbar^k}{k!}\frac{\delta^i}{\delta\varphi^i}\frac{\delta^k}{\delta\varphi^k}\bar{T}_{t-l}\Big(L_{\text{int}}^{\otimes (t-l)}\Big)\cdot \\
&\qquad\quad\cdot(W)^k\frac{\delta^k}{\delta\varphi^k}T_{l+1}\Big(\frac{\delta^{j_1}}{\delta\varphi^{j_1}}L_{\text{int}}\otimes\dots\otimes\frac{\delta^{j_l}}{\delta\varphi^{j_l}}L_{\text{int}}\otimes\sin(a\varphi)\Big)\Big)\cdot \\
&\qquad\quad\cdot(\Delta^F)^j(W)^i\frac{\delta^i}{\delta\varphi^i}\frac{\delta^j}{\delta\varphi^j}r^N_1.
\end{split}
\end{equation*}
Then we plug this expression in equation (\ref{eq:retVr1}) and arrive at:
\begin{equation}
\label{eq:retsinr1}
\begin{split}
R_t\Big(L_{\text{int}}^{\otimes t}\otimes&(\sin(a\varphi)\cdot r^N_1)\Big)= \\
=\Big(\frac{i}{\hbar}\Big)^t&\sum_{l=0}^t\binom{t}{l}(-1)^{t-l}\sum_{j=0}^{2N}\hbar^j\sum_{\substack{j_1,\dots,j_l\ge 0 \\ j_1+\dots+j_l=j}}\frac{1}{j_1!\dots j_l!}\sum_{i=0}^{2N-j}\frac{\hbar^i}{i!}\cdot \\
&\cdot\Big(\sum_{k=0}^\infty\frac{\hbar^k}{k!}\frac{\delta^i}{\delta\varphi^i}\frac{\delta^k}{\delta\varphi^k}\bar{T}_{t-l}\Big(L_{\text{int}}^{\otimes (t-l)}\Big)(W)^k\cdot \\
&\qquad\cdot\frac{\delta^k}{\delta\varphi^k}T_{l+1}\Big(\frac{\delta^{j_1}}{\delta\varphi^{j_1}}L_{\text{int}}\otimes\dots\otimes\frac{\delta^{j_l}}{\delta\varphi^{j_l}}L_{\text{int}}\otimes\sin(a\varphi)\Big)\Big)\cdot \\
&\cdot(\Delta^F)^j(W)^i\frac{\delta^i}{\delta\varphi^i}\frac{\delta^j}{\delta\varphi^j}r^N_1.
\end{split}
\end{equation}
A completely analogous expression occurs for the term $R_t\Big(L_{\text{int}}^{\otimes t}\otimes(\cos(a\varphi)q^N_1)\Big)$ in equation (\ref{eq:decomposition ret s1}), with $\cos(a\varphi)$ in place of $\sin(a\varphi)$ and $q^N_1$ in place of $r^N_1$.

\section{Renormalization of the interacting currents}
\label{sec:renorm}

In the following sections we perform the renormalization of the retarded components $R_t\Big(L_{\text{int}}^{\otimes t}\otimes s^N_2\Big)$ and $R_t\Big(L_{\text{int}}^{\otimes t}\otimes(\sin(a\varphi)\cdot r^N_1)\Big)$ (recall that the other term in equation (\ref{eq:decomposition ret s1}) for the retarded component $s^N_1$ is completely analogous). Specifically, we show that it is possible to extend their unrenormalized expressions, as distributions defined on the subset $\check{\mathbb{M}}_2^{t+1}\subseteq\mathbb{M}_2^{t+1}$ (see formula (\ref{eq:M check})), to distributions defined on the whole space $\mathbb{M}_2^{t+1}$. More importantly, we also show that the number of counterterms is bounded by the degree of the components, i.e they are super-renormalizable.

We adopt an approach which we call piecewise renormalization. It consists of three steps: expansion of the expression to be renormalized in its elementary parts, renormalization of each one of the elementary parts separately and finally showing that reassembling the renormalized elementary parts all together gives a well-defined result.

Before starting with our program, we recall (see \cite{BR18} and \cite{BRF21} for more details) that vertex operators $V_a$, $a>0$, act on configurations $\varphi\in C^\infty(\mathbb{M}_2)$, returning a distribution (a smooth function in fact), in the following way:
\begin{equation*}
V_a(x)\colon\varphi\mapsto V_a(x)[\varphi]:=e^{ia\varphi(x)}.
\end{equation*}
The critical property of vertex operators is that functional derivatives of vertex operators have the form:
\begin{equation}
\label{eq:derVert}
\frac{\delta^k}{\delta\varphi(y_1)\dots\delta\varphi(y_k)}V_a(x)=(ia)^k\delta(y_1-x)\dots\delta(y_k-x)V_a(x),
\end{equation}
thus they are essentially again vertex operators, modulo constant coefficients. We use vertex operators to write the interaction Lagrangian of the sine-Gordon model as $L_{\text{int}}=\cos(a\varphi)=\frac{1}{2}(V_a+V_{-a})$ and to write $\sin(a\varphi)=\frac{1}{2i}(V_a-V_{-a})$.

We also introduce some notation. By slight abuse, we denote with $x:=(\tau,\xi)$ the set of light-cone coordinates on $\mathbb{M}_2$, and consequently with $(x_1,\dots,x_n):=(\tau_1,\xi_1,\dots,\tau_n,\xi_n)$ the set of light-cone coordinates on $\mathbb{M}_2^n$.
Substituting the vertex operators, using formula (\ref{eq:derVert}) and omitting the numerical coefficients, we can write the generic term of equation (\ref{eq:rets2}) in the form:
\begin{equation}
\label{eq:genterms2}
\begin{split}
&\frac{\delta^i}{\delta\varphi^i}\frac{\delta^k}{\delta\varphi^k}\bar{T}_{t-l}\Big(V_{a_{l+1}}(x_{l+1})\otimes\dots\otimes V_{a_t}(x_t)\Big)\big(W(x_{\{l+1\le\cdot\le t\}}-x_{\{1\le\cdot\le l\}})\big)^k\cdot \\
&\cdot\frac{\delta^k}{\delta\varphi^k} T_l\Big(V_{a_1}(x_1)\otimes\dots\otimes V_{a_l}(x_l)\Big)\big(\partial^\cdot_{\xi_{t+1}}W(x_{\{l+1\le\cdot\le t\}}-x_{t+1})\big)^i\cdot \\
&\cdot\big(\partial^\cdot_{\xi_{t+1}}\Delta^F(x_{\{1\le\cdot\le l\}}-x_{t+1})\big)^j\frac{\delta^i}{\delta\varphi^i}\frac{\delta^j}{\delta\varphi^j}s^N_2(x_{t+1}),
\end{split}
\end{equation}
where, moreover, we have that:
\begin{itemize}
\item $a_1,\dots,a_t\in\{+a,-a\}$, $a\in\mathbb{R_+}$;
\item $\big(W(x_{\{l+1\le\cdot\le t\}}-x_{\{1\le\cdot\le l\}})\big)^k$ denotes products of $k$ Wightman two-point functions and for each one of them the first argument belongs to the set $x_{\{l+1\le\cdot\le t\}}:=\{x_{l+1},\dots,x_t\}$, while the second argument belongs to the set $x_{\{1\le\cdot\le l\}}:=\{x_1,\dots,x_l\}$, in all the possible combinations;
\item $\big(\partial^\cdot_{\xi_{t+1}}W(x_{\{l+1\le\cdot\le t\}}-x_{t+1})\big)^i$ denotes products of $i$ Wightman two-point functions where for each one of them the first argument belongs to the set $x_{\{l+1\le\cdot\le t\}}$, the second argument is $x_{t+1}$ and moreover we take a derivative of unspecified order $\partial^\cdot_{\xi_{t+1}}$ with respect to the coordinate $\xi_{t+1}$ of $x_{t+1}=(\tau_{t+1},\xi_{t+1})$ (the order of the derivative depends on the results of the functional derivatives $\frac{\delta^i}{\delta\varphi^i}$ of $s_2^N(x_{t+1})$);
\item $\big(\partial^\cdot_{\xi_{t+1}}\Delta^F(x_{\{1\le\cdot\le l\}}-x_{t+1})\big)^j$ denotes products of $j$ Feynman propagators where for each one of them the first argument belongs to the set $x_{\{1\le\cdot\le l\}}$, the second argument is $x_{t+1}$ and moreover we take a derivative of unspecified order $\partial^\cdot_{\xi_{t+1}}$ with respect to the coordinate $\xi_{t+1}$ of $x_{t+1}=(\tau_{t+1},\xi_{t+1})$ (the order of the derivative depends on the results of the functional derivatives $\frac{\delta^j}{\delta\varphi^j}$ of $s_2^N(x_{t+1})$).
\end{itemize}
For the generic term of equation (\ref{eq:retsinr1}) we have the only difference that also the time-ordered products depend on $x_{t+1}$. In this case we have:
\begin{equation}
\label{eq:gentermr1}
\begin{split}
&\frac{\delta^i}{\delta\varphi^i}\frac{\delta^k}{\delta\varphi^k}\bar{T}_{t-l}\Big(V_{a_{l+2}}(x_{l+2})\otimes\dots\otimes V_{a_{t+1}}(x_{t+1})\Big)\cdot \\
&\cdot\big(W(x_{\{l+2\le\cdot\le t+1\}}-x_{\{1\le\cdot\le l+1\}})\big)^k\big(\partial^\cdot_{\xi_{l+1}}W(x_{\{l+2\le\cdot\le t+1\}}-x_{l+1})\big)^i\cdot \\
&\cdot\frac{\delta^k}{\delta\varphi^k} T_{l+1}\Big(V_{a_1}(x_1)\otimes\dots\otimes V_{a_l}(x_l)\otimes V_{a_{l+1}}(x_{l+1})\Big)\cdot \\
&\cdot\big(\partial^\cdot_{\xi_{l+1}}\Delta^F(x_{\{1\le\cdot\le l\}}-x_{l+1})\big)^j\frac{\delta^i}{\delta\varphi^i}\frac{\delta^j}{\delta\varphi^j}r^N_1(x_{l+1}),
\end{split}
\end{equation}
with same notations as above.

We are now ready to discuss the renormalization of the generic terms (\ref{eq:genterms2}) and (\ref{eq:gentermr1}). In Section \ref{sec:renormalization to derF} we consider time-ordered products of vertex operators together with derivatives of Feynman propagators and perform what we call their piecewise renormalization. In Section \ref{sec:piecewise ren ac} we consider anti-chronological products of vertex operators and we perform their piecewise renormalization.

As for the derivatives of the components of the currents, they are smooth functions hence there is no need for renormalization. The products of Wightman two-point functions and their derivatives, instead, are always well-defined distributions on $\mathbb{M}_2^{t+1}$ according to H\"ormander's sufficient criterion.

Finally, in Section \ref{sec:big_prod} we reassemble the pieces all together and show, by a careful study of the wavefront sets of all the elements involved, that the result is well-defined.

\subsection{Piecewise renormalization of time-ordered products \\ and derivatives of Feynman propagators}
\label{sec:renormalization to derF}
According to our plan, we first expand the expressions to be renormalized in their most elementary parts. In \cite{BR18} it is shown that the unrenormalized time-ordered products of vertex operators can be written in exponential form as:
\begin{equation}
\label{eq:to_exp}
T_l\Big(V_{a_1}(x_1)\otimes\dots\otimes V_{a_l}(x_l)\Big)=e^{i\big(a_1\varphi(x_1)+\dots+a_l\varphi(x_l)\big)}\prod_{1\le i<j\le l}e^{-a_ia_j\hbar\Delta^F(x_i-x_j)}.
\end{equation}

Omitting the exponentials of configurations, which do not need renormalization, we can expand the exponentials of Feynman propagators as a formal power series in $\hbar$. The coefficient of the power $\hbar^p$ is given by:
\begin{equation}
\label{eq:exp_to_ordp}
\begin{split}
\sum_{\substack{\{p_{i,j}\ge 0,\,1\le i< j\le l \\ \text{s.t.}\,\sum_{i,j}p_{i,j}=p\}}}&\frac{(-1)^p(a_1a_2)^{p_{1,2}}\cdots(a_{l-1}a_l)^{p_{l-1,l}}}{p_{1,2}!\cdots p_{l-1,l}!}\cdot \\
&\cdot(\Delta^F)^{p_{1,2}}(x_1-x_2)\cdots(\Delta^F)^{p_{l-1,l}}(x_{l-1}-x_l),
\end{split}
\end{equation} 

\subsubsection{Discussion for components $s^N_2$}
\label{sec:piecewise ren s2}

We now concentrate specifically on the time-ordered products of vertex operators and derivative of Feynman propagators appearing in formula (\ref{eq:genterms2}). We write the products of derivatives of Feynman propagators as:
\begin{equation}
\label{eq:products of der of Feynman}
\begin{split}
&\partial_{\xi_{t+1}}^{i_{1,1}}\Delta^F(x_1-x_{t+1})\cdots\partial_{\xi_{t+1}}^{i_{1,n_1}}\Delta^F(x_1-x_{t+1})\cdot \\
\cdot&\partial_{\xi_{t+1}}^{i_{2,1}}\Delta^F(x_2-x_{t+1})\cdots\partial_{\xi_{t+1}}^{i_{2,n_2}}\Delta^F(x_2-x_{t+1})\cdot \\
&\quad\vdots \\
\cdot&\partial_{\xi_{t+1}}^{i_{l,1}}\Delta^F(x_l-x_{t+1})\cdots\partial_{\xi_{t+1}}^{i_{l,n_l}}\Delta^F(x_l-x_{t+1}),
\end{split}
\end{equation}
where $i_{r\,s}\ge 1$, if $n_r\ge 1$, and otherwise, if $n_r=0$, then there is no product of Feynman propagators with argument $(x_r-x_{t+1})$. Using formulas (\ref{eq:products of der of Feynman}) and (\ref{eq:exp_to_ordp}) in expression (\ref{eq:genterms2}), we obtain that the coefficient of $\hbar^p$, modulo multiplicative constants, is:
\begin{equation}
\label{eq:ren_to_s2}
\begin{split}
\sum_{\substack{\{p_{i,j}\ge 0,\,1\le i< j\le l \\ \text{s.t.}\,\sum_{i,j}p_{i,j}=p\}}}&\frac{(-1)^p(a_1a_2)^{p_{1,2}}\cdots(a_{l-1}a_l)^{p_{l-1,l}}}{p_{1,2}!\cdots p_{l-1,l}!}\cdot \\
&\cdot(\Delta^F)^{p_{1,2}}(x_1-x_2)\cdots(\Delta^F)^{p_{l-1,l}}(x_{l-1}-x_l)\cdot \\
&\cdot\partial_{\xi_{t+1}}^{i_{1,1}}\Delta^F(x_1-x_{t+1})\cdots\partial_{\xi_{t+1}}^{i_{1,n_1}}\Delta^F(x_1-x_{t+1})\cdot \\
&\quad\vdots \\
&\cdot\partial_{\xi_{t+1}}^{i_{l,1}}\Delta^F(x_l-x_{t+1})\cdots\partial_{\xi_{t+1}}^{i_{l,n_l}}\Delta^F(x_l-x_{t+1}).
\end{split}
\end{equation}
We consider each one of the elementary parts separately, as distributions defined on $ \mathbb{M}_2\setminus\Set{0}$, and denote them by:
\begin{equation}
\label{eq:pieces}
\begin{rcases}
D_{1,2}:=(\Delta^F)^{p_{1,2}}, \\
\quad\vdots \\
D_{l-1,l}:=(\Delta^F)^{p_{l-1,l}}, \\
D_{1,t+1}:=(\partial_{\xi_{t+1}}^{i_{1,1}}\Delta^F)\cdots(\partial_{\xi_{t+1}}^{i_{1,n_1}}\Delta^F), \\
\quad\vdots \\
D_{l,t+1}:=(\partial_{\xi_{t+1}}^{i_{l,1}}\Delta^F)\cdots(\partial_{\xi_{t+1}}^{i_{l,n_l}}\Delta^F),
\end{rcases}
\in\mathscr{D}'(\mathbb{M}_2\setminus\{0\}).
\end{equation}

In order to proceed with the second step of the piecewise renormalization process of formula (\ref{eq:ren_to_s2}), we take into account the Steinmann scaling degree of the Feynman propagator. On the $2$-dimensional Minkowski space $\mathbb{M}_2$, the Feynman propagator $\Delta^F$ scales homogeneously with scaling degree $\text{sd}(\Delta^F)=0$, and every derivative potentially increases by one the scaling degree (see \cite{D}, \cite{Stein}).

\begin{lemma}
\label{lemma:sd bound s2}
For the products of derivatives of Feynman propagators appearing in formula (\ref{eq:pieces}) the following estimate on the scaling degrees holds:
\begin{equation}
\label{eq:estimate sd}
\begin{rcases}
\text{sd}\Big((\partial_{\xi_{t+1}}^{i_{1,1}}\Delta^F)\cdots(\partial_{\xi_{t+1}}^{i_{1,n_1}}\Delta^F)\Big)\le\sum_{s=1}^{n_1}i_{1,s}, \\
\quad\vdots \\
\text{sd}\Big((\partial_{\xi_{t+1}}^{i_{l,1}}\Delta^F)\cdots(\partial_{\xi_{t+1}}^{i_{l,n_l}}\Delta^F)\Big)\le\sum_{s=1}^{n_l}i_{l,s},
\end{rcases}
\le\text{deg}(s^N_2)=2(N+1),
\end{equation}
for every $N\in\mathbb{N}$.
\end{lemma}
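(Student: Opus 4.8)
The plan is to split the displayed chain of inequalities into its two independent links and handle them separately. The first link, $\text{sd}\big(\prod_{s}\partial_{\xi_{t+1}}^{i_{r,s}}\Delta^F\big)\le\sum_{s}i_{r,s}$, is a purely formal consequence of the behaviour of the Steinmann scaling degree under differentiation and multiplication. The second link, $\sum_{s}i_{r,s}\le 2(N+1)$, is the substantive one: it requires identifying exactly where the derivative orders $i_{r,s}$ come from and tying them to the degree of $s^N_2$ from Definition \ref{defn:deg_A}.

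For the first link I would use the two properties of the scaling degree recalled just before the statement. On $\mathbb{M}_2$ one has $\text{sd}(\Delta^F)=0$, and each coordinate derivative raises the scaling degree by at most one, so for a single factor $\text{sd}\big(\partial_{\xi_{t+1}}^{i_{r,s}}\Delta^F\big)\le i_{r,s}$. I would then combine these estimates by the subadditivity of the scaling degree under products of distributions (valid on $\mathbb{M}_2\setminus\{0\}$, where each $D_{r,t+1}$ of formula (\ref{eq:pieces}) is a genuine distribution), giving $\text{sd}\big(\prod_{s=1}^{n_r}\partial_{\xi_{t+1}}^{i_{r,s}}\Delta^F\big)\le\sum_{s=1}^{n_r}\text{sd}\big(\partial_{\xi_{t+1}}^{i_{r,s}}\Delta^F\big)\le\sum_{s=1}^{n_r}i_{r,s}$. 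This settles the first link row by row.

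The heart of the argument is the second link, and the hard part is the bookkeeping that connects the derivative orders to the degree. I would start from Remark \ref{rem:comp poly} and Proposition \ref{prop:deg_cons_curr}: the field $s^N_2$ is a polynomial in the $\xi$-derivatives of $\varphi$, homogeneous of degree $2(N+1)$, so each of its monomials has the form $\varphi_{k_1\xi}\cdots\varphi_{k_M\xi}$ with $k_1+\dots+k_M=2(N+1)$. The derivatives $\partial^\cdot_{\xi_{t+1}}$ carried by the Feynman propagators in (\ref{eq:genterms2}) are produced precisely by the functional derivatives $\frac{\delta^j}{\delta\varphi^j}s^N_2(x_{t+1})$: differentiating a monomial removes one factor $\varphi_{k\xi}$ and leaves a kernel $\partial_{\xi}^{k}\delta$, which upon contraction with a Feynman propagator in the time-ordered product deposits a derivative $\partial_{\xi_{t+1}}^{k}$ on that propagator. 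Hence each derivative order $i_{r,s}$ equals the number $k$ of $\xi$-derivatives carried by the corresponding differentiated factor.

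I would conclude by summation. The derivative orders $\{i_{r,s}\}$ occurring across all the pieces $D_{r,t+1}$ form a subcollection of the exponents $k_1,\dots,k_M$ of a single monomial — namely those factors hit by a functional derivative — so their grand total is at most $k_1+\dots+k_M=2(N+1)$. Since each individual row sum $\sum_{s=1}^{n_r}i_{r,s}$ is non-negative and bounded above by this grand total, every row sum is in turn bounded by $2(N+1)=\text{deg}(s^N_2)$, which is exactly the claim, uniformly in $N$. The one point deserving care is checking that the selected factors are genuinely distinct (so that no exponent $k$ is double-counted), which follows from the linearity of each factor $\varphi_{k\xi}$ in $\varphi$ together with the Leibniz rule; granting this, homogeneity of $s^N_2$ supplies the bound at once.
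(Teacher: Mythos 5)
Your proposal is correct and follows essentially the same route as the paper: the paper's proof likewise combines $\text{sd}(\Delta^F)=0$, the fact that each derivative raises the scaling degree by at most one, subadditivity $\text{sd}(uv)\le\text{sd}(u)+\text{sd}(v)$, and the structure of $s^N_2$ from Remark \ref{rem:comp poly} and Proposition \ref{prop:deg_cons_curr}. The only difference is that you spell out the bookkeeping (functional derivatives of monomials $\varphi_{k_1\xi}\cdots\varphi_{k_M\xi}$ depositing derivatives of order $k$ on the propagators, with no double-counting by linearity of each factor) which the paper leaves implicit in its citation of those two results.
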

\begin{proof}
The result follows from Remark \ref{rem:comp poly} and Proposition \ref{prop:deg_cons_curr} on the structure of the components $s^N_2$ and from the general fact that, given two distributions $u,v\in\mathscr{D}'$ whose distributional product is well-defined, it holds
\begin{equation}
\label{eq:sd prod}
\text{sd}(uv)\le\text{sd}(u)+\text{sd}(v).
\end{equation}
\end{proof}
\begin{rem}
Formula (\ref{eq:sd prod}) also implies that any power of the Feynman propagator has scaling degree equal to $0$ on $\mathbb{M}_2\setminus\{0\}$.
\end{rem}

Knowing an estimate on the scaling degree of each one of the elements in formula (\ref{eq:pieces}), we can apply well-known results (\cite{BF}, \cite{BFK}, \cite{D}) to extend them to distributions defined on the whole $\mathbb{M}_2$ in such a way to also preserve the scaling degree. We denote these extensions by:
\begin{equation}
\label{eq:exted_pieces}
\begin{rcases}
[D_{1,2}]:=[(\Delta^F)^{p_{1,2}}], \\
\quad\vdots \\
[D_{l-1,l}]:=[(\Delta^F)^{p_{l-1,l}}], \\
[D_{1,t+1}]:=[(\partial_{\xi_{t+1}}^{i_{1,1}}\Delta^F)\cdots(\partial_{\xi_{t+1}}^{i_{1,n_1}}\Delta^F)], \\
\quad\vdots \\
[D_{l,t+1}]:=[(\partial_{\xi_{t+1}}^{i_{l,1}}\Delta^F)\cdots(\partial_{\xi_{t+1}}^{i_{l,n_l}}\Delta^F)],
\end{rcases}
\in\mathscr{D}'(\mathbb{M}_2).
\end{equation}
\begin{rem}
\label{rem:uniq of ext}
We note that for powers of Feynman propagators, since their scaling degree is equal to $0$, the extensions are direct and unique. For products of derivatives of Feynman propagators, when the scaling degree is $\text{sd}\ge 2$, the extension is unique up to adding a finite number of derivatives of the Dirac delta, namely derivatives up to order $\text{sd}-2$ (see also beginning of Section \ref{sec:unrenorm}).
\end{rem}

\begin{rem}
Considering the wavefront set of the Feynman propagator and the fact, pointed out in Remark \ref{rem:uniq of ext}, that the extensions are realized by possibly adding Dirac deltas, we have that the wavefront set of each element of (\ref{eq:exted_pieces}) is contained in the set
\begin{equation}
\label{eq:Gamma0}
\begin{split}
\Gamma_0:=&\Set{(w,k)\in T^\ast\mathbb{M}_2 |\, |w|^2=0,\,w\neq 0,\,k=\frac{\eta_\flat(w)}{\lambda},\,\lambda>0}\,\cup \\
&\cup\,\Set{(0,k)\in T^\ast\mathbb{M}_2 | k\neq 0},
\end{split}
\end{equation} 
where $|w|^2=\eta(w,w)$, and $\eta_\flat\colon T\mathbb{M}_2\rightarrow T^\ast\mathbb{M}_2$ is the isomorphism induced by the Minkowski metric.
\end{rem}

The piecewise renormalization process has to maintain the translation invariance of the unrenormalized expressions. Starting from the extended elementary parts (\ref{eq:exted_pieces}), we obtain translation-invariant distributions defined on $\mathbb{M}_2^2$ performing the pull-back of every element via appropriate maps.
\begin{lemma}
\label{lemma:well pb}
Consider the maps
\begin{equation*}
\begin{split}
s_{i,j}\colon\quad\mathbb{M}_2^2\quad&\rightarrow\quad\mathbb{M}_2 \\
(x_i,x_j)&\mapsto w_{i,j}=x_i-x_j,
\end{split}
\end{equation*}
where $i,j\in\Set{1,\dots,l}$ and $i<j$, or $i=1,\dots,l$ and $j=t+1$. Then the pull-back
\begin{equation}
\label{eq:pullb_exted_pieces}
\begin{rcases}
s_{1,2}^\ast\big([D_{1,2}]\big)=[(\Delta^F)^{p_{1,2}}](x_1-x_2), \\
\quad\vdots \\
s_{l-1,l}^\ast\big([D_{l-1,l}]\big)=[(\Delta^F)^{p_{l-1,l}}](x_{l-1}-x_l), \\
s_{1,t+1}^\ast\big([D_{1,t+1}]\big)=[(\partial_{\xi_{t+1}}^{i_{1,1}}\Delta^F)\cdots(\partial_{\xi_{t+1}}^{i_{1,n_1}}\Delta^F)](x_1-x_{t+1}), \\
\quad\vdots \\
s_{l,t+1}^\ast\big([D_{l,t+1}]\big)=[(\partial_{\xi_{t+1}}^{i_{l,1}}\Delta^F)\cdots(\partial_{\xi_{t+1}}^{i_{l,n_l}}\Delta^F)](x_l-x_{t+1}),
\end{rcases}
\in\mathscr{D}'(\mathbb{M}_2^2),
\end{equation}
are well-defined translation-invariant distributions.
\end{lemma}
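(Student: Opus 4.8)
The plan is to verify that each pull-back in Lemma \ref{lemma:well pb} is well-defined by checking H\"ormander's criterion for the pull-back of a distribution by a smooth map, and then to establish translation invariance as an immediate consequence of the fact that the maps $s_{i,j}$ factor through a difference. First I would recall the relevant criterion: if $f\colon X\to Y$ is smooth and $u\in\mathscr{D}'(Y)$, then the pull-back $f^\ast u$ is well-defined provided the normal set of $f$, namely
\begin{equation*}
N_f:=\Set{(f(x),k)\in Y\times\mathbb{R}^{\dim Y} | {}^t(df_x)k=0},
\end{equation*}
has empty intersection with $\WF(u)$. Here each $s_{i,j}\colon\mathbb{M}_2^2\to\mathbb{M}_2$ is the linear surjection $(x_i,x_j)\mapsto x_i-x_j$, whose differential has transpose ${}^t(ds_{i,j})(k)=(k,-k)$. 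This transpose is injective, so ${}^t(ds_{i,j})k=0$ forces $k=0$; hence $N_{s_{i,j}}\subseteq\Set{(w,0) | w\in\mathbb{M}_2}$ contains no nonzero covectors, and its intersection with $\WF([D_{i,j}])\subseteq\Gamma_0$ is empty because every element of $\Gamma_0$ has nonzero covector. This is the crux of the argument and the only place where the explicit description of the wavefront set in $\Gamma_0$ is used.

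Next I would address well-definedness more carefully, since the extended elementary parts $[D_{i,j}]$ are themselves products (or powers) of distributions rather than single propagators. The products were already shown to satisfy H\"ormander's multiplicability criterion on $\mathbb{M}_2\setminus\{0\}$, and the extensions to $\mathbb{M}_2$ are realized by adding derivatives of the Dirac delta, which contribute only the covectors $\Set{(0,k) | k\neq 0}$ to the wavefront set; thus $\WF([D_{i,j}])\subseteq\Gamma_0$ holds throughout $\mathbb{M}_2$, as recorded in the remark preceding the lemma. With $\WF([D_{i,j}])\subseteq\Gamma_0$ and $N_{s_{i,j}}\cap\Gamma_0=\emptyset$ in hand, the pull-back theorem guarantees that each $s_{i,j}^\ast([D_{i,j}])\in\mathscr{D}'(\mathbb{M}_2^2)$ exists and that its wavefront set is contained in $s_{i,j}^\ast\Gamma_0$, the image under the natural pull-back of covectors.

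Finally I would establish translation invariance. Each $s_{i,j}$ is invariant under the diagonal translation $(x_i,x_j)\mapsto(x_i+h,x_j+h)$ for every $h\in\mathbb{M}_2$, since $(x_i+h)-(x_j+h)=x_i-x_j$. Denoting by $T_h$ the corresponding translation operator on $\mathbb{M}_2^2$, this says $s_{i,j}\circ T_h=s_{i,j}$, whence $T_h^\ast s_{i,j}^\ast=s_{i,j}^\ast$ on distributions; applied to $[D_{i,j}]$ this yields $T_h^\ast\big(s_{i,j}^\ast([D_{i,j}])\big)=s_{i,j}^\ast([D_{i,j}])$, which is exactly diagonal translation invariance of the pulled-back distribution. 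I expect the main obstacle to be purely expository rather than mathematical: one must confirm that the scaling-degree-preserving extensions supplied in the previous step do not enlarge the wavefront set beyond $\Gamma_0$, so that the covector computation genuinely applies to the extended objects and not merely to the bare propagators on $\mathbb{M}_2\setminus\{0\}$. Once that is granted, the transversality condition $N_{s_{i,j}}\cap\Gamma_0=\emptyset$ and the translation argument are both routine.
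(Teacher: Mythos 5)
Your proposal is correct and takes essentially the same route as the paper: both verify H\"ormander's pull-back criterion by computing the transpose of the tangent map of $s_{i,j}$, which sends $k\mapsto(k,-k)$ and hence annihilates no nonzero covector, so the criterion is satisfied for \emph{any} distribution --- in particular, the containment $\mathrm{WF}([D_{i,j}])\subseteq\Gamma_0$ that you flag as the crux is not actually needed for well-posedness (the paper states this as ``well-defined in general for any distribution, hence a fortiori''), only for the later wavefront-set estimates. Your explicit verification of translation invariance via $s_{i,j}\circ T_h=s_{i,j}$ is a welcome addition, since the paper asserts this property in the statement but leaves it implicit in the proof.
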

\begin{proof}
Pull-back along the maps $s_{i,j}$ is well-defined in general for any distribution, hence a fortiori for our distributions (\ref{eq:exted_pieces}). Indeed the transpose of the tangent maps $(s_{i,j}')^t$ have the form:
\begin{equation*}
\begin{split}
(s_{i,j}')^t\colon\quad\qquad T^\ast\mathbb{M}_2\qquad\,&\rightarrow\qquad T^\ast\mathbb{M}_2^2 \\
(w_{i,j}=x_i-x_j,k)&\mapsto\,(x_i,k;x_j,-k).
\end{split}
\end{equation*}
Hence the condition ensuring the well-posedness of the pull-back (see \cite{Horm})
\begin{equation*}
(s_{i,j}')^t\big(\Gamma_0\big)\,\cap\,\Set{(x_i,0;x_j,0)\subset T^\ast\mathbb{M}_2^2}=\emptyset
\end{equation*}
is always satisfied, for any $i,j$ as above.
\end{proof}

\begin{rem}
From the properties of the wavefront set under the operation of pull-back and formula (\ref{eq:Gamma0}), we obtain that the wavefront set of the distributions (\ref{eq:pullb_exted_pieces}) are contained respectively in the sets
\begin{equation}
\label{eq:Gamma}
\begin{split}
\Gamma_{i,j}=&(s_{i,j}')^t\big(\Gamma_0\big)=\Big\{(x_i,k;x_j,-k)\in T^\ast\mathbb{M}_2^2 |\,|x_i-x_j|^2=0,\,x_i\neq x_j, \\
&k=\frac{\eta_\flat(x_i-x_j)}{\lambda},\,\lambda>0\Big\}\,\cup\,\Set{(x,k;x,-k)\in T^\ast\mathbb{M}_2^2 | k\neq 0},
\end{split}
\end{equation}
with $i,j\in\Set{1,\dots,l}$ and $i<j$, or $i=1,\dots,l$ and $j=t+1$.
\end{rem}

We have thus completed the piecewise renormalization of the elementary parts. Reassembling them together, we arrive at the following piecewise renormalized expression for the coefficient of $\hbar^p$:
\begin{equation}
\label{eq:ren_gen_term_to_s2}
\begin{split}
\sum_{\substack{\{p_{i,j}\ge 0,\,1\le i< j\le l \\ \text{s.t.}\,\sum_{i,j}p_{i,j}=p\}}}&\frac{(-1)^p(a_1a_2)^{p_{1,2}}\cdots(a_{l-1}a_l)^{p_{l-1,l}}}{p_{1,2}!\cdots p_{l-1,l}!}\cdot \\
&\cdot[(\Delta^F)^{p_{1,2}}](x_1-x_2)\cdots[(\Delta^F)^{p_{l-1,l}}](x_{l-1}-x_l)\cdot \\
&\cdot[(\partial_{\xi_{t+1}}^{i_{1,1}}\Delta^F)\cdots(\partial_{\xi_{t+1}}^{i_{1,n_1}}\Delta^F)](x_1-x_{t+1})\cdot \\
&\quad\vdots \\
&\cdot[(\partial_{\xi_{t+1}}^{i_{l,1}}\Delta^F)\cdots(\partial_{\xi_{t+1}}^{i_{l,n_l}}\Delta^F)](x_l-x_{t+1}).
\end{split}
\end{equation}

We can repeat the piecewise renormalization in the same way for the coefficients of every power of $\hbar^p$. Summing together all the orders, we finally obtain the piecewise renormalized expression of the time-ordered products of vertex operators and derivatives of Feynman propagators appearing in formula (\ref{eq:genterms2})
\begin{equation}
\label{eq:rened_pairing_to_s2}
\begin{split}
&\big[T_l\Big(V_{a_1}(x_1)\otimes\dots\otimes V_{a_l}(x_l)\Big)\big]\cdot[(\partial_{\xi_{t+1}}^{i_{1\,1}}\Delta^F)\cdots(\partial_{\xi_{t+1}}^{i_{1\,n_1}}\Delta^F)](x_1-x_{t+1})\cdots \\
&\cdots[(\partial_{\xi_{t+1}}^{i_{l\,1}}\Delta^F)\cdots(\partial_{\xi_{t+1}}^{i_{l\,n_l}}\Delta^F)](x_l-x_{t+1}).
\end{split}
\end{equation}
We show that the distributional products in this formula are actually well-defined according to H\"ormander's sufficent criterion in Theorem \ref{theo:main} below.
\begin{rem}
\label{rem:super s2}
We stress that, from Remark \ref{rem:uniq of ext}, the extensions regarding the Feynman propagators and consequently the time-ordered products of
vertex operators are unique and direct. Instead, the extensions regarding the products of derivatives of Feynman propagators are not unique. Nevertheless, the number of derivatives of Dirac deltas that can be added in the extension process is finite and bounded by $\text{deg}(s^N_2)-2=2N$. 
\end{rem}

\subsubsection{Discussion for components $s^N_1$} 

The piecewise renormalization of the time-ordered products of vertex operators and derivatives of Feynman propagators in the case of the retarded components $s^N_1$ proceeds in a completely analogous way. We only point out some slight differences. 

The first one is that the time-ordered products appearing in formula (\ref{eq:gentermr1}) depend also on the same argument $x_{l+1}$ as the term $r^N_1$. This translates in the fact that the coefficient of the power $\hbar^p$ takes the form
\begin{equation}
\label{eq:gen_term_to_r1}
\begin{split}
\sum_{\substack{\{p_{i,j}\ge 0,\,1\le i< j\le l+1 \\ \text{s.t.}\,\sum_{i,j}p_{i,j}=p\}}}&\frac{(-1)^p(a_1a_2)^{p_{1,2}}\cdots(a_la_{l+1})^{p_{l,l+1}}}{p_{1,2}!\cdots p_{l,l+1}!}\cdot \\
&\cdot(\Delta^F)^{p_{1,2}}(x_1-x_2)\cdots(\Delta^F)^{p_{l,l+1}}(x_l-x_{l+1})\cdot \\
&\cdot\partial_{\xi_{l+1}}^{i_{1,1}}\Delta^F(x_1-x_{l+1})\cdots\partial_{\xi_{l+1}}^{i_{1,n_1}}\Delta^F(x_1-x_{l+1})\cdot \\
&\quad\vdots \\
&\cdot\partial_{\xi_{l+1}}^{i_{l,1}}\Delta^F(x_l-x_{l+1})\cdots\partial_{\xi_{l+1}}^{i_{l,n_l}}\Delta^F(x_l-x_{l+1}).
\end{split}
\end{equation}

The second difference is that, from Proposition \ref{prop:deg_cons_curr}, $\text{deg}(r^N_1)=2N$. So the same argument as in Lemma \ref{lemma:sd bound s2} tells us that the scaling degrees of the products of derivatives of Feynman propagators in (\ref{eq:gen_term_to_r1}) are bounded by $2N$. 

We can repeat the same passages as in the previous Section, obtaining a piecewise renormalized version of the coefficient of $\hbar^p$ for every order $p$:
\begin{equation}
\label{eq:piecewiese ren gen_term_to_r1}
\begin{split}
\sum_{\substack{\{p_{i,j}\ge 0,\,1\le i< j\le l+1 \\ \text{s.t.}\,\sum_{i,j}p_{i,j}=p\}}}&\frac{(-1)^p(a_1a_2)^{p_{1,2}}\cdots(a_la_{l+1})^{p_{l,l+1}}}{p_{1,2}!\cdots p_{l,l+1}!}\cdot \\
&\cdot[(\Delta^F)^{p_{1,2}}](x_1-x_2)\cdots[(\Delta^F)^{p_{l,l+1}}](x_l-x_{l+1})\cdot \\
&\cdot[(\partial_{\xi_{l+1}}^{i_{1,1}}\Delta^F)\cdots(\partial_{\xi_{l+1}}^{i_{1,n_1}}\Delta^F)](x_1-x_{l+1})\cdot \\
&\quad\vdots \\
&\cdot[(\partial_{\xi_{l+1}}^{i_{l,1}}\Delta^F)\cdots(\partial_{\xi_{l+1}}^{i_{l,n_l}}\Delta^F)](x_l-x_{l+1}).
\end{split}
\end{equation}
Summing together all the orders, we finally get the piecewise renormalized expression of the time-ordered products of vertex operators and derivatives of Feynman propagators appearing in formula (\ref{eq:gentermr1}):
\begin{equation}
\label{eq:rened_pairing_to_r1}
\begin{split}
&\big[T_{l+1}\Big(V_{a_1}(x_1)\otimes\dots\otimes V_{a_l}(x_l)\otimes V_{a_{l+1}}(x_{l+1})\Big)\big]\cdot \\
&[(\partial_{\xi_{l+1}}^{i_{1,1}}\Delta^F)\cdots(\partial_{\xi_{l+1}}^{i_{1,n_1}}\Delta^F)](x_1-x_{l+1})\cdots[(\partial_{\xi_{l+1}}^{i_{l,1}}\Delta^F)\cdots(\partial_{\xi_{l+1}}^{i_{l,n_l}}\Delta^F)](x_l-x_{l+1}).
\end{split}
\end{equation}
We show that the distributional products in this formula are actually well-defined according to H\"ormander's sufficent criterion in Theorem \ref{theo:main} below.
\begin{rem}
As previously pointed out in Remark \ref{rem:super s2} for the retarded components $s^N_2$, also in this case the ambiguity in the renormalization process, given by number of derivatives of Dirac deltas that can be added, is finite and bounded by $\text{deg}(s^N_1)-2=2(N-1)$.
\end{rem}

\subsection{Piecewise renormalization of anti-chronological \\ products of vertex operators}
\label{sec:piecewise ren ac}

We now consider the anti-chronological products appearing in formulas (\ref{eq:genterms2}) and (\ref{eq:gentermr1}). In both cases we have anti-chronological products of vertex operators with $t-l$ arguments. We indicate them by:
\begin{equation*}
\label{eq:unrenorm ac}
\bar{T}_{t-l}\Big(V_{a_{l+1}}(x_{l+1})\otimes\dots\otimes V_{a_t}(x_t)\Big).
\end{equation*}
The anti-chronological products of vertex operators can be written in the following exponential form:
\begin{equation*}
\label{eq:anticr_exp}
\begin{split}
&\bar{T}_{t-l}\Big(V_{a_{l+1}}(x_{l+1})\otimes\dots\otimes V_{a_t}(x_t)\Big)= \\
&=e^{i\big(a_{l+1}\varphi(x_{l+1})+\dots+a_t\varphi(x_t)\big)}\prod_{l+1\le i<j\le t}e^{-a_ia_j\hbar\Delta^{AF}(x_i-x_j)}, 
\end{split}
\end{equation*}
where $\Delta^{AF}$ is the anti Feynman propagator, defined as the complex conjugate of the Feynman propagator $\Delta^{AF}=\overline{\Delta^F}$. For its scaling degree, it holds:
\begin{equation*}
\label{eq:sd antiF}
\text{sd}(\Delta^{AF})=\text{sd}(\overline{\Delta^F})=\text{sd}(\Delta^F)=0.
\end{equation*}

We can expand the product of exponentials of anti Feynman propagators and collect the coefficient of $\hbar^q$:
\begin{equation*}
\label{eq:anticr_ordq}
\begin{split}
\sum_{\substack{\{q_{i,j}\ge 0,\,l+1\le i<j\le t \\ \text{s.t.}\,\sum_{i,j}q_{i,j}=q}\}}&\frac{(-1)^q(a_{l+1}a_{l+2})^{q_{l+1,l+2}}\cdots(a_{t-1}a_{t})^{q_{t-1,t}}}{q_{l+1,l+2}!\cdots q_{t-1,t}!}\cdot \\
&\cdot(\Delta^{AF})^{q_{l+1,l+2}}(x_{l+1}-x_{l+2})\cdots(\Delta^{AF})^{q_{t-1,t}}(x_{t-1}-x_{t}).
\end{split}
\end{equation*}
Each one of the elementary parts has scaling degree equal to $0$. So they admit direct and unique extensions (cfr. Remark \ref{rem:uniq of ext}). Repeating once more the same passages as in Section \ref{sec:piecewise ren s2}, we arrive at the piecewise renormalized expression of the coefficient of $\hbar^q$:
\begin{equation}
\label{eq:ren_anticr_ordq}
\begin{split}
\sum_{\substack{\{q_{i,j}\ge 0,\,l+1\le i<j\le t \\ \text{s.t.}\,\sum_{i,j}q_{i,j}=q}\}}&\frac{(-1)^q(a_{l+1}a_{l+2})^{q_{l+1,l+2}}\cdots(a_{t-1}a_{t})^{q_{t-1,t}}}{q_{l+1,l+2}!\cdots q_{t-1,t}!}\cdot \\
&\cdot[(\Delta^{AF})^{q_{l+1,l+2}}](x_{l+1}-x_{l+2})\cdots[(\Delta^{AF})^{q_{t-1,t}}](x_{t-1}-x_{t}).
\end{split}
\end{equation}
Summing together all the orders, we finally get the piecewise renormalized expression of the anti-chronological products of vertex operators appearing in formulas (\ref{eq:genterms2}) and (\ref{eq:gentermr1}), which we denote by:
\begin{equation}
\label{eq:ren_anticr_prod}
\big[\bar{T}_{t-l}\Big(V_{a_{l+1}}(x_{l+1})\otimes\dots\otimes V_{a_t}(x_t)\Big)\big].
\end{equation}

\subsection{Well-posedness of the piecewise renormalized expressions}
\label{sec:big_prod}
We complete the renormalization of the retarded components of the higher currents showing that the piecewise renormalized expressions obtained in the previous sections are indeed well-defined. We proceed as follows.

First we prove the well-posedness of the piecewise renormalized time-ordered products of vertex operators and derivatives of Feynman propagators and of the piecewise renormalized anti-chronological products of vertex operators by explicitly computing their wavefront set. Then we estimate the wavefront set of the products of (derivatives of) Wightman two-point functions. Finally, we show that these elements satisfy H\"ormander's sufficient criterion and thus their product is well-defined. \\

Let us start with the piecewise renormalized coefficient of the power $\hbar^p$ of time-ordered products of vertex operators and derivatives of Feynman propagators for components $s^N_1$, formula (\ref{eq:piecewiese ren gen_term_to_r1}). We regard it as the result of the pull-back of the tensor product of all its elementary parts
\begin{equation}
\label{eq:tensor prod to dF}
\begin{split}
\text{TOF}^p_{l+1}:=\sum_{\substack{\{p_{i,j}\ge 0,\,1\le i< j\le l+1 \\ \text{s.t.}\,\sum_{i,j}p_{i,j}=p\}}}&\frac{(-1)^p(a_1a_2)^{p_{1,2}}\cdots(a_la_{l+1})^{p_{l,l+1}}}{p_{1,2}!\cdots p_{l,l+1}!} \\
&[(\Delta^F)^{p_{1,2}}](w_{1,2})\otimes\cdots\otimes[(\Delta^F)^{p_{l,l+1}}](w_{l,l+1})\otimes \\
\otimes&[(\partial_{\xi_{l+1}}^{i_{1,1}}\Delta^F)\cdots(\partial_{\xi_{l+1}}^{i_{1,n_1}}\Delta^F)](\tilde{w}_{1,l+1})\otimes \\
&\quad\vdots \\
\otimes&[(\partial_{\xi_{l+1}}^{i_{l,1}}\Delta^F)\cdots(\partial_{\xi_{l+1}}^{i_{l,n_l}}\Delta^F)](\tilde{w}_{l,l+1}),
\end{split}
\end{equation}
seen as a distribution in $\mathscr{D}'(\mathbb{M}_2^K)$, $K=\binom{l+1}{2}+l$, via the map 
\begin{equation}
\label{eq:s_map}
\begin{split}
s\colon\quad\qquad\mathbb{M}_2^{l+1}\qquad&\rightarrow\qquad\mathbb{M}_2^K \\
(x_1,\dots,x_{l+1})&\mapsto (w_{i,j}=x_i-x_j,\tilde{w}_{k,l+1}=x_k-x_{l+1}),
\end{split}
\end{equation}
for $1\le i<j\le l+1$ and $1\le k\le l$. 

The question of the well-posedness of the coefficient (\ref{eq:piecewiese ren gen_term_to_r1}) is now rephrased in terms of the well-posedness of the pull-back $s^\ast(\text{TOF}^p_{l+1})$. In other words, we ask whether the condition
\begin{equation*}
(s')^t\big(\text{WF}(\text{TOF}^p_{l+1})\big)\bigcap\big(\mathbb{M}_2\times\Set{0}\big)^{l+1}=\emptyset,
\end{equation*}
where $(s')^t$ is the transpose of the tangent map of $s$ and $\text{WF}(\text{TOF}^p_{l+1})$ is the wavefront set of $\text{TOF}^p_{l+1}$, is satisfied.

A graph notation, introduced in \cite{BFK} to describe the wavefront set of products of Feynman propagators in the context of Algebraic Quantum Field Theory on curved spacetimes, turns out to be the proper tool to answer this question. We recall this notation from \cite{BF}, adapting it to our spacetime $\mathbb{M}_2$:
\begin{itemize}
\item Denote by $\mathcal{G}_n$ the set of non-oriented graphs with vertexes $V=\Set{1,\dots,n}$, and by $E^G$ the set of edges of a given graph $G\in\mathcal{G}_n$. For any edge $e\in E^G$ between vertexes $i<j$, we set source $\sigma(e)=i$ and target $\tau(e)=j$;
\item A couple of maps $(\chi,\kappa)$ is an immersion of the graph $G\in\mathcal{G}_n$ into our spacetime $\mathbb{M}_2$ if:
\begin{itemize}
\item $\chi\colon V\rightarrow\mathbb{M}_2$ maps vertexes $i$ of $G$ to points $x_i\in\mathbb{M}_2$, with the condition that if the vertexes $i<j$ are connected by an edge, then $|x_i-x_j|^2=\eta(x_i-x_j,x_i-x_j)=0$;
\item $\kappa\colon E^G\rightarrow T^\ast\mathbb{M}_2$ with the condition that, if the vertexes $i<j$ are connected by the edge $e\in E^G$, then the covector $\kappa(e)=:k_e$ is 
\begin{equation}
\label{eq:k_e_F}
\begin{cases}
k_e=\lambda_{ij}\eta_\flat(x_i-x_j)\quad\text{for some}\quad\lambda_{ij}>0,\qquad\text{if}\quad x_i\neq x_j, \\
k_e\in\big(\mathbb{M}_2\setminus\Set{0}\big),\qquad\text{if}\quad x_i=x_j.
\end{cases}
\end{equation}
The covector $k_e$ is said to be outgoing for the point $x_i$ and incoming for the point $x_j$. 
\end{itemize} 
\end{itemize}
Using this notation, the wavefront set $\Lambda_{l+1}:=(s')^t\big(\text{WF}(\text{TOF}_p)\big)$ of $s^\ast(\text{TOF}^p_{l+1})$ can be described as: \begin{equation}
\label{eq:Lambda_l+1}
\begin{split}
\Lambda_{l+1}=\Biggl\{(&x_1,k_1;\dots;x_{l+1},k_{l+1})\in T^\ast\mathbb{M}_2^{l+1}\,|\,\,\exists\,G\in\mathcal{G}_{l+1}\quad\text{and} \\
&\exists\quad\text{an immersion}\quad(\chi,\kappa)\quad\text{of}\quad G\quad\text{such that} \\
&\quad k_i=\sum_{\substack{e\in E^G \\ \sigma(e)=i}}k_e-\sum_{\substack{f\in E^G \\ \tau(f)=i}}k_f\Biggr\}.
\end{split}
\end{equation}
\begin{prop}
\label{prop:well_def} 
For every order $p$, the piecewise renormalized coefficient of $\hbar^p$, formula (\ref{eq:piecewiese ren gen_term_to_r1}), is a well-defined distribution on $\mathbb{M}_2^{l+1}$. Namely, the condition
\begin{equation}
\label{eq:pull_back_cond}
\Lambda_{l+1}\cap\big(\mathbb{M}_2\times\Set{0}\big)^{l+1}=\emptyset
\end{equation}
for the well-posedness of the pull-back $s^\ast(\text{TOF}^{\,p}_{l+1})$ is satisfied. Consequently the piecewise renormalized time-ordered products of vertex operators and derivatives of Feynman propagators, formula (\ref{eq:rened_pairing_to_r1}), are also well-defined.
\end{prop}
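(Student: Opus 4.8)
The plan is to verify the pull-back condition (\ref{eq:pull_back_cond}) directly on the graph description (\ref{eq:Lambda_l+1}) of $\Lambda_{l+1}$, arguing by contradiction. I would assume that there exist a graph $G\in\mathcal{G}_{l+1}$ carrying at least one edge, an immersion $(\chi,\kappa)$, and points $x_1,\dots,x_{l+1}$ such that the associated net covectors all vanish, i.e. $k_i=\sum_{\sigma(e)=i}k_e-\sum_{\tau(f)=i}k_f=0$ for every $i$, and then derive a contradiction with the immersion requirement that each edge covector $k_e$ be nonzero. The conceptual point is that the pull-back along the map $s$ in (\ref{eq:s_map}) superimposes the edge covectors at shared endpoints, so that a genuine singular direction of $\text{TOF}^p_{l+1}$ could a priori be annihilated; the proposition asserts precisely that this cannot occur.

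The core of the argument is a causal maximum principle, preceded by a sign bookkeeping. By (\ref{eq:k_e_F}), for every edge $e$ joining distinct points $x_i\neq x_j$ the covector $k_e$ is a strictly positive multiple of $\eta_\flat(x_i-x_j)$; since interchanging source and target only flips the global sign, the contribution of such an edge to the net covector at whichever endpoint lies to the future is always a positive multiple of $\eta_\flat$ of a future-directed null vector, hence a nonzero covector in the closed future cone (call it $\overline{V}^+$). I would then fix a time function on $\mathbb{M}_2$, breaking ties with a generic timelike linear functional, and select a point $y$ of maximal time among $x_1,\dots,x_{l+1}$. Every genuine null edge incident to $y$ joins it to a strictly earlier point, because two distinct null-separated points of $\mathbb{M}_2$ cannot be simultaneous; therefore all such contributions to the covector sitting at $y$ lie in the pointed cone $\overline{V}^+$. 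As a pointed convex cone admits no nontrivial vanishing combination, the net covector at $y$ can vanish only if $y$ carries no genuine null edge at all.

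The main obstacle, and the genuinely delicate step, is the treatment of coincident points, i.e. the diagonal part of $\Gamma_0$ in (\ref{eq:Gamma0}), where (\ref{eq:k_e_F}) allows the edge covector to point in an arbitrary direction; here the naive maximum principle stalls, since edges between coincident vertices may carry covectors of either causal sign and could, around a cycle, sum to zero. To close this gap I would first contract each maximal cluster of mutually coincident vertices to a single point, noting that edges internal to a cluster contribute opposite covectors $\pm k_e$ and so cancel in the cluster's total covector; the reduced configuration on the distinct points then again satisfies the hypotheses of the maximum principle and forces the top-time cluster to be isolated. It remains to exclude nonzero internal covectors inside such an isolated all-coincident cluster, and this is where one must replace the crude estimate $\Gamma_0$ by the finer wavefront structure of the renormalized Feynman powers: for the $2$-dimensional massless propagator the singular directions at coincidence are confined to the two light-ray conormals inherited from the Feynman $i0$-prescription, and I expect this confinement to be the crux on which the whole proposition turns, ruling out the cyclic cancellations.

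Once the maximum principle yields a contradiction in every case, condition (\ref{eq:pull_back_cond}) holds, so H\"ormander's criterion makes the pull-back $s^\ast(\text{TOF}^p_{l+1})$ a well-defined distribution on $\mathbb{M}_2^{l+1}$; summing over $p$ shows that the products in (\ref{eq:rened_pairing_to_r1}) are well-posed. The identical argument, with $x_{t+1}$ in place of $x_{l+1}$ and with $\text{deg}(s^N_2)$ supplied by Proposition \ref{prop:deg_cons_curr} in place of $\text{deg}(r^N_1)$, settles the $s^N_2$ case of (\ref{eq:rened_pairing_to_s2}). Throughout, the derivatives $\partial_{\xi}$ only raise scaling degrees and keep the singular directions inside the cones described by $\Gamma_0$, so they leave the geometric part of the argument untouched.
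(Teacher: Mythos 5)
Your overall strategy coincides with the paper's own proof: argue by contradiction on the graph description (\ref{eq:Lambda_l+1}); use the causal maximum principle (in each connected component the vertex of maximal time is joined only to strictly earlier, null-separated points, so the covectors over it are positive multiples of $\eta_\flat$ of future-directed null vectors, which span a pointed cone and cannot sum to zero); and treat coincident vertices by contracting each coincidence cluster, the covectors of edges internal to a cluster cancelling in pairs in the cluster's total covector. Up to this point your argument and the paper's proof of Proposition \ref{prop:well_def} are essentially identical, down to the observation that distinct null-separated points of $\mathbb{M}_2$ cannot be simultaneous.

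The divergence is in the final step, and there your proposal has a genuine gap. You correctly observe that the contraction argument says nothing about an isolated all-coincident cluster, i.e. a connected component all of whose edges join vertices mapped to one point: there a cycle can carry nonzero edge covectors with all net covectors vanishing (for a triangle at a single point take $k_{e_{12}}=t$, $k_{e_{13}}=-t$, $k_{e_{23}}=t$, any $t\neq 0$). But the mechanism you invoke to close this case — that the singular directions of the extensions at coincidence are confined to the two light-ray conormals — is both left unproven and false: since $\square\Delta^F=c\,\delta$ and differential operators do not enlarge wavefront sets, the full fiber $\Set{(0,k) | k\neq 0}$ lies in $\text{WF}(\Delta^F)$, exactly as encoded in $\Gamma_0$ of (\ref{eq:Gamma0}); moreover, even a confinement to null covectors without an orientation restriction would not exclude the cancellation above, since $t$ itself may be chosen null. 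So your proof is incomplete precisely at the step you yourself call the crux. You should know, however, that the paper's proof does not do better: it sums the conditions $k_{i_1}=\dots=k_{i_n}=0$ over a cluster, cancels the internal covectors, and declares the situation ``reduced'' to the loop-free case — an argument that yields a contradiction only if some non-loop edge survives the reduction, and which is silent on exactly the isolated-cluster configurations you identified. Such configurations do occur in $\text{TOF}^p_{l+1}$ as soon as $l\ge 2$ and $p\ge 3$ (e.g. $p_{1,2}=p_{1,3}=p_{2,3}=1$), so the set $\Lambda_{l+1}$ as literally defined in (\ref{eq:Lambda_l+1}) does meet the zero section, and condition (\ref{eq:pull_back_cond}) cannot be established from the bound $\Gamma_0$ alone. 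Closing this case requires finer information — for instance a direct computation of the actual wavefront set of products of (extensions of) powers of the two-dimensional logarithmic propagator over higher diagonals — which neither your proposal nor the paper supplies; to this extent you have located a real weak point of the published argument rather than merely failed to reproduce it.
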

\begin{proof}
Consider first immersed graphs in $\mathbb{M}_2$ with no loops, i.e suppose that the immersion map $\chi$ is injective. In this case we obtain the thesis from the following argument. For every immersed vertex $x_i$, the corresponding covector $k_i$ is given by a sum of covectors which are coparallel to the directions of connection of the vertex to its adjacent vertexes in the immersed graph. The directions of the connections always lie on the boundary of the light-cone. This means that, in order to have all covectors $k_i$ equal to zero, every vertex $x_i$ of the immersed graph has to be connected to its adjacent vertexes in opposite directions. But this can never be the case. In fact, each connected component of every immersed graph has a finite number of vertexes and if we consider for example, in a connected component, the vertex $\bar{x}$ with maximum time coordinate, then this vertex will be connected to its adjacent vertexes only in past-directed directions. Hence the covectors over $\bar{x}$ cannot sum up to zero.

Suppose now that the immersed graph contains loops, namely that the immersion $\chi$ maps vertexes $I=\{i_1,\dots,i_n\}\subseteq\{1,\dots,l+1\}$, $n\le l+1$, to the same point $x_{I}\in\mathbb{M}_2$. Let us denote by $E^G_I$ the set of loops, namely $E^G_I$ is the subset of edges $e\in E^G$ such that $\sigma(e)\in I$ and $\tau(e)\in I$. Then the conditions that the covectors $k_{i_1},\dots,k_{i_n}$ over the points $x_{i_1}=\dots=x_{i_n}=x_I$ are all equal to zero can be written as:
\begin{equation*}
\begin{split}
k_{i_1}&=\sum_{\substack{e\in E^G\setminus E^G_I \\ \sigma(e)=i_1}}k_e-\sum_{\substack{f\in E^G\setminus E^G_I \\ \tau(f)=i_1}}k_f+\sum_{\substack{e\in E^G_I \\ \sigma(e)=i_1}}k_e-\sum_{\substack{f\in E^G_I \\ \tau(f)=i_1}}k_f=0 \\
&\quad\vdots \\
k_{i_n}&=\sum_{\substack{e\in E^G\setminus E^G_I \\ \sigma(e)=i_n}}k_e-\sum_{\substack{f\in E^G\setminus E^G_I \\ \tau(f)=i_n}}k_f+\sum_{\substack{e\in E^G_I \\ \sigma(e)=i_n}}k_e-\sum_{\substack{f\in E^G_I \\ \tau(f)=i_n}}k_f=0.
\end{split}
\end{equation*}
From these equations we see that each one of the covectors $k_e$ associated to an edge $e\in E^G_I$ appears twice, with opposite signs. If we sum up the equations above, we are then left with the condition:
\begin{equation*}
k_I=k_{i_1}+\dots+k_{i_n}=\sum_{\substack{e\in E^G\setminus E^G_I \\ \sigma(e)\in I}}k_e-\sum_{\substack{f\in E^G\setminus E^G_I \\ \tau(f)\in I}}k_f=0.
\end{equation*}
This corresponds to the condition that we get if we look at the immersed graph $G$, without considering the loops. We are then reduced to the situation discussed above and we can apply the same argument to conclude.
\end{proof}
Completing the characterization of the renormalized time-ordered products of vertex operators and derivatives of Feynman propagators, we have the following result.
\begin{prop}
\label{prop:micro_cond}
The wavefront set $\Lambda_{l+1}$ of the renormalized time-ordered products of vertex operators and derivatives of Feynman propagators satisfies the microlocal condition
\begin{equation*}
\Lambda_{l+1}\cap\Big(\big(\mathbb{M}_2\times\overline{V}_-\big)^{l+1}\cup\big(\mathbb{M}_2\times\overline{V}_+\big)^{l+1}\Big)=\emptyset,
\end{equation*}
where $\overline{V}_-$ and $\overline{V}_+$ are the closure of the past and future light cones respectively.
\end{prop}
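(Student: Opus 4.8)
The plan is to reduce the claimed microlocal condition to a single ``momentum conservation'' identity for the covectors appearing in (\ref{eq:Lambda_l+1}), together with the fact that $\overline{V}_+$ and $\overline{V}_-$ are salient (pointed) closed convex cones. I would argue by contradiction: assuming a point $(x_1,k_1;\dots;x_{l+1},k_{l+1})\in\Lambda_{l+1}$ has all its covectors in $\overline{V}_+$ (the case of $\overline{V}_-$ being symmetric under $k\mapsto -k$), I would show that every $k_i$ must vanish, which is impossible by Proposition \ref{prop:well_def}.

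First I would establish that $\sum_{i=1}^{l+1}k_i=0$ at every point of $\Lambda_{l+1}$. This is immediate from the defining relation $k_i=\sum_{\sigma(e)=i}k_e-\sum_{\tau(f)=i}k_f$ in (\ref{eq:Lambda_l+1}): summing over all vertices $i$, each edge covector $k_e$ occurs exactly once with a $+$ sign at its source $\sigma(e)$ and once with a $-$ sign at its target $\tau(e)$, so all edge contributions cancel, for any graph $G\in\mathcal{G}_{l+1}$ and any immersion $(\chi,\kappa)$.

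Next I would use that $\overline{V}_+$ is convex and salient, i.e. $\overline{V}_+\cap\overline{V}_-=\{0\}$: if $v_1,\dots,v_n\in\overline{V}_+$ satisfy $\sum_i v_i=0$, then $-v_1=\sum_{i\ge 2}v_i\in\overline{V}_+$, whence $v_1\in\overline{V}_+\cap(-\overline{V}_+)=\{0\}$, and likewise each $v_i=0$. Applied to $v_i=k_i$ under the standing assumption, this forces $k_1=\dots=k_{l+1}=0$. But Proposition \ref{prop:well_def} gives $\Lambda_{l+1}\cap(\mathbb{M}_2\times\{0\})^{l+1}=\emptyset$, so no point of $\Lambda_{l+1}$ can have all its covectors equal to zero; this contradiction yields $\Lambda_{l+1}\cap(\mathbb{M}_2\times\overline{V}_+)^{l+1}=\emptyset$, and the past-cone case is identical.

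I do not expect a genuine obstacle: once the conservation identity $\sum_i k_i=0$ is in hand, the argument is purely algebraic. The two points that require care are fixing the time-orientation convention so that $\overline{V}_+$ is genuinely the salient future cone in the fibres of $T^\ast\mathbb{M}_2$, and checking that in two dimensions $\overline{V}_\pm$ are indeed convex wedges bounded by the two null rays. It is worth emphasizing that, unlike in Proposition \ref{prop:well_def}, the fine geometric input on the null and coparallel directions of the individual edge covectors $k_e$ is not needed here: only the global identity $\sum_i k_i=0$ and the salience of $\overline{V}_\pm$ enter.
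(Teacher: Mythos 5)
Your proof is correct, but it follows a genuinely different route from the paper's. The paper argues locally on each connected component of the immersed graph: the vertex with maximal time coordinate is joined to its neighbours only along past-directed null directions, so the covector over it lies in $\overline{V}_-$ (dually, the covector over the vertex with minimal time coordinate lies in $\overline{V}_+$), contradicting the assumption that all covectors lie in $\overline{V}_+$ (resp. $\overline{V}_-$); graphs with loops are handled by a separate summation in which the loop covectors cancel pairwise. Your argument is instead global: the telescoping identity $\sum_i k_i=0$, valid for every graph and every immersion (loops included, since each $k_e$ enters once with sign $+$ at its source and once with sign $-$ at its target), combined with convexity and salience of $\overline{V}_\pm$, forces all covectors to vanish, which Proposition \ref{prop:well_def} forbids. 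Your version buys three things: it dispenses with the case analysis on connected components and loops; it reuses Proposition \ref{prop:well_def} as a black box rather than re-running its internal argument, which is what the paper's proof effectively does; and it makes explicit a non-degeneracy point the paper leaves implicit, namely that the contradiction at the extremal vertex requires its covector to be non-zero (membership in $\overline{V}_-\cap\overline{V}_+$ alone is not a contradiction, since $0$ lies in both closed cones), a fact which is again the salience of the cone in disguise. What the paper's extremal-vertex argument buys in exchange is finer directional information --- the covectors over the top and bottom vertices of each component are genuinely past- resp.\ future-pointing --- which is not needed for the stated condition but is the kind of refinement one would want for sharper microlocal spectrum statements.
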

\begin{proof}
For each connected component of each immersed graph, we have a vertex $\bar{x}_+$ with maximum time coordinate and another vertex $\bar{x}_-$ with minimum time coordinate. This means that $\bar{x}_+$ is connected to its adjacent vertexes only by past-directed directions, and hence the covector over $\bar{x}_+$ is past-directed. Conversely the vertex $\bar{x}_-$ is connected to its adjacent vertexes only by future-directed directions, and hence the covector over $\bar{x}_-$ is future-directed.

This situation is not affected by the presence of loops at the vertexes $\bar{x}_+$ or $\bar{x}_-$. In fact, suppose that $\bar{x}_+$ is the image, via the immersion map $\chi$, of the vertexes $I:=\{i_1,\dots,i_n\}\subseteq\{1,\dots,l+1\}$, $n\le l+1$. Then, similarly as in the proof of Proposition \ref{prop:well_def}, we have that the covectors over the immersed vertexes $x_{i_1},\dots,x_{i_n}$ can be summed up to give:
\begin{equation*}
k_+=k_{i_1}+\dots+k_{i_n}=\sum_{\substack{e\in E^G\setminus E^G_I \\ \sigma(e)\in I}}k_e-\sum_{\substack{f\in E^G\setminus E^G_I \\ \tau(f)\in I}}k_f,
\end{equation*}
which is precisely the expression that we get if we look at the immersed graph $G$, without considering the loops. If we now assume that all covectors belong to $\overline{V}_+$, then also $k_+\in\overline{V}_+$. This is a contradiction, because from the argument at the beginning we know that for immersed graphs without loops the covector $k_+$ over $\bar{x}_+$ must belong to $\overline{V}_-$. If we assume, on the contrary, that all covectors belong to $\overline{V}_-$ and repeat the previous reasoning for $\bar{x}_-$, we get a contradiction since we know that for immersed graphs without loops the covector over $\bar{x}_-$ must belong to $\overline{V}_+$. 
\end{proof}

\begin{rem}
\label{rem:comment on s2}
For what concerns the piecewise renormalized time-ordered products of vertex operators and derivatives of Feynman propagators in the case of components $s^N_2$, formulas (\ref{eq:ren_gen_term_to_s2}) and (\ref{eq:rened_pairing_to_s2}), it suffices to repeat the same passages substituting the subscript $l+1$ with $t+1$.
\end{rem}
We now consider the piecewise renormalized coefficient of the power $\hbar^q$ of the anti-chronological products of vertex operators, formula (\ref{eq:ren_anticr_prod}). We regard it as the pull-back of the tensor product
\begin{equation*}
\label{eq:tensor prod ren_anticr_ordq}
\begin{split}
\text{ACV}^q_{t-l}:=&\sum_{\substack{\{q_{i,j}\ge 0,\,l+1\le i<j\le t \\ \text{s.t.}\,\sum_{i,j}q_{i,j}=q}\}}\frac{(-1)^q(a_{l+1}a_{l+2})^{q_{l+1,l+2}}\cdots(a_{t-1}a_{t})^{q_{t-1,t}}}{q_{l+1,l+2}!\cdots q_{t-1,t}!}\cdot \\
&\qquad\qquad\cdot[(\Delta^{AF})^{q_{l+1,l+2}}](w_{l+1,l+2})\otimes\cdots\otimes[(\Delta^{AF})^{q_{t-1,t}}](w_{t-1,t}),
\end{split}
\end{equation*}
as a distribution defined on $\mathbb{M}_2^{\tilde{K}}$, $\tilde{K}=\binom{t-l}{2}$, via the map
\begin{equation*}
\label{eq:s' map}
\begin{split}
\tilde{s}\colon\quad\qquad\mathbb{M}_2^{t-l}\qquad&\rightarrow\qquad\mathbb{M}_2^{\tilde{K}} \\
(x_{l+1},\dots,x_t)&\mapsto (w_{i,j}=x_i-x_j),
\end{split}
\end{equation*}
for $l+1\le i<j\le t$. The condition for the well-posedness of the pull-back $\tilde{s}\,^\ast\big(\text{ACV}^q_{t-l}\big)$ becomes then
\begin{equation*}
(\tilde{s}\,')^t\big(\text{WF}(\text{ACV}^q_{t-l})\big)\bigcap\big(\mathbb{M}_2\times\Set{0}\big)^{t-l}=\emptyset.
\end{equation*}
The set $\tilde{\Lambda}_{t-l}:=(\tilde{s}\,')^t\big(\text{WF}(\text{ACV}^q_{t-l})\big)$ can be described slightly adapting the graph notation. Recalling that the anti Feynman propagator is defined as $\Delta^{AF}=\overline{\Delta^F}$, we have the relation: 
\begin{equation*}
\label{eq:WF_anti_F}
\text{WF}(\Delta^{AF})=-\text{WF}(\Delta^{F})=\Set{(w,-k)\in T^\ast\mathbb{M}_2 | (w,k)\in\text{WF}(\Delta^F)}.
\end{equation*}
This means that in this case, in the definition of immersion $(\tilde{\chi},\tilde{\kappa})$ of a graph, the prescription is:
\begin{equation}
\label{eq:k_e_AF}
\begin{cases}
\tilde{k}_e=-\lambda_{ij}\eta_\flat(x_i-x_j)\quad\text{for some}\quad\lambda_{ij}>0,\qquad\text{if}\quad x_i\neq x_j, \\
\tilde{k}_e\in\big(\mathbb{M}_2\setminus\Set{0}\big),\qquad\text{if}\quad x_i=x_j.
\end{cases}
\end{equation}
We have then:
\begin{equation}
\label{eq:tilde L}
\begin{split}
\tilde{\Lambda}_{t-l}:=\Biggl\{(&x_{l+1},k_{l+1};\dots;x_{t},k_{t})\in T^\ast\mathbb{M}_2^{t-l}\,|\,\,\exists\,G\in\mathcal{G}_{t-l}\quad\text{and} \\
&\exists\quad\text{an immersion}\quad(\tilde{\chi},\tilde{\kappa})\quad\text{of}\quad G\quad\text{such that} \\
&\quad k_i=\sum_{\substack{e\in E^G \\ \sigma(e)=i}}\tilde{k}_e-\sum_{\substack{f\in E^G \\ \tau(f)=i}}\tilde{k}_f\Biggr\}.
\end{split}
\end{equation}
This modification does not affect the validity of the arguments in the proofs of Proposition \ref{prop:well_def} and Proposition \ref{prop:micro_cond}, whose passages can be repeated to obtain the expected results. 
\begin{prop}
For every order $q$, the piecewise renormalized coefficient of $\hbar^q$, formula (\ref{eq:ren_anticr_ordq}), is a well-defined distribution on $\mathbb{M}_2^{t-l}$. Namely, the condition
\begin{equation*}
\tilde{\Lambda}_{t-l}\cap\big(\mathbb{M}_2\times\Set{0}\big)^{t-l}=\emptyset.
\end{equation*}
for the well-posedness of the pull-back $\tilde{s}\,^\ast\big(\text{ACV}^{\,q}_{t-l}\big)$ is satisfied. Consequently the piecewise renormalized anti-chronological products of vertex operators, formula (\ref{eq:ren_anticr_prod}), are also well-defined. 
\end{prop}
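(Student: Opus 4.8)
The plan is to repeat the argument of Proposition \ref{prop:well_def} essentially verbatim, the only change being the uniform sign reversal of the edge covectors prescribed by (\ref{eq:k_e_AF}). I would start from the graph description (\ref{eq:tilde L}) of the set $\tilde{\Lambda}_{t-l}=(\tilde{s}\,')^t\big(\text{WF}(\text{ACV}^q_{t-l})\big)$, where now, for an immersion $(\tilde{\chi},\tilde{\kappa})$ of a graph $G\in\mathcal{G}_{t-l}$, every edge covector $\tilde{k}_e$ joining $x_i\neq x_j$ is coparallel to $-\eta_\flat(x_i-x_j)$ with positive proportionality constant, rather than to $+\eta_\flat(x_i-x_j)$ as in (\ref{eq:k_e_F}). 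The goal is to verify the pull-back condition $\tilde{\Lambda}_{t-l}\cap(\mathbb{M}_2\times\{0\})^{t-l}=\emptyset$.

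First I would treat loop-free immersions, i.e.\ the case in which $\tilde{\chi}$ is injective. As in Proposition \ref{prop:well_def}, each connected component of the immersed graph possesses a vertex $\bar{x}$ of maximal time coordinate, and $\bar{x}$ is joined to all of its neighbours only along past-directed null directions. The sign flip in (\ref{eq:k_e_AF}) turns the covectors $\tilde{k}_e$ emanating from $\bar{x}$ from past-directed into future-directed, but they remain coparallel to distinct null rays all lying in the same closed half of the light cone. Consequently they cannot cancel, the total covector $k_{\bar{x}}=\sum_{\sigma(e)=\bar{x}}\tilde{k}_e-\sum_{\tau(f)=\bar{x}}\tilde{k}_f$ over $\bar{x}$ is nonzero, and the simultaneous vanishing of all the $k_i$ is excluded. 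This is exactly condition (\ref{eq:pull_back_cond}) transported to $\tilde{\Lambda}_{t-l}$.

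Then I would dispose of loops precisely as in the proof of Proposition \ref{prop:well_def}: if $\tilde{\chi}$ collapses a subset $I$ of vertices onto a single point, the edge covectors $\tilde{k}_e$ with both endpoints in $I$ appear twice with opposite signs in the sum $\sum_{i\in I}k_i$ and drop out, leaving only the contributions of the edges joining $I$ to the remaining vertices. One is thereby reduced to the loop-free situation already handled, and the previous argument applies unchanged. This establishes the well-posedness of the pull-back $\tilde{s}^\ast(\text{ACV}^q_{t-l})$, hence of the coefficient (\ref{eq:ren_anticr_ordq}); summing over all orders $q$ then yields the well-definedness of the piecewise renormalized anti-chronological products (\ref{eq:ren_anticr_prod}).

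I do not anticipate any genuine obstacle here, since the statement has been engineered to fall under the same geometric mechanism as Proposition \ref{prop:well_def}. The only point that requires care is checking that reversing the orientation of every edge covector does not spoil the non-cancellation at the time-extremal vertex. As the argument relies solely on the fact that at a time-maximal vertex all connecting null directions lie on one side of the light cone, and as the sign flip merely interchanges past- and future-directedness globally, the obstruction to cancellation survives intact.
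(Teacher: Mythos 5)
Your proposal is correct and follows exactly the route the paper takes: the paper itself disposes of this proposition by remarking that the sign reversal in the immersion prescription (\ref{eq:k_e_AF}) does not affect the arguments of Proposition \ref{prop:well_def}, whose passages are simply repeated. Your explicit verification that the global interchange of past- and future-directedness preserves the non-cancellation at a time-extremal vertex, and that the loop-cancellation step is sign-independent, is precisely the content of that remark carried out in detail.
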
 
\begin{prop}
\label{prop:micro_cond_ac}
The wavefront set of the renormalized anti-chronological products of vertex operators $\tilde{\Lambda}_{t-l}$ satisfies the microlocal condition
\begin{equation*}
\tilde{\Lambda}_{t-l}\cap\Big(\big(\mathbb{M}_2\times\overline{V}_-\big)^{t-l}\cup\big(\mathbb{M}_2\times\overline{V}_+\big)^{t-l}\Big)=\emptyset.
\end{equation*}
\end{prop}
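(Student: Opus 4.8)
The plan is to reproduce the argument of Proposition~\ref{prop:micro_cond} essentially verbatim, the only structural change being the sign in the immersion prescription. Comparing (\ref{eq:k_e_AF}) with (\ref{eq:k_e_F}), one sees that for anti-chronological products every edge covector is replaced by $\tilde{k}_e=-\lambda_{ij}\eta_\flat(x_i-x_j)$, i.e.\ by the negative of its Feynman counterpart; since the covectors $k_i$ over the vertexes are assembled from the $\tilde{k}_e$ through the linear combinations in (\ref{eq:tilde L}), the whole family of vertex covectors produced by a given graph and immersion is simply the negative of the family produced in the Feynman case. I would first treat immersed graphs without loops, i.e.\ injective immersions $\tilde{\chi}$.

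For such a graph I would isolate, in each connected component, the vertex $\bar{x}_+$ of maximal time coordinate and the vertex $\bar{x}_-$ of minimal time coordinate, whose connecting null directions to their neighbours are respectively future- and past-directed, exactly as in Proposition~\ref{prop:micro_cond}. By the sign reversal just noted, and using $-\overline{V}_\pm=\overline{V}_\mp$, the conclusions there get interchanged: the covector over $\bar{x}_+$ now lies in $\overline{V}_+$ while the covector over $\bar{x}_-$ now lies in $\overline{V}_-$. Each of these covectors is moreover nonzero, by the same reasoning as in Proposition~\ref{prop:well_def}: it is a positive combination of null covectors that all share one time-orientation, hence cannot sum to zero (here one uses that the extremal vertex actually carries an edge, which is guaranteed because the covector tuple of a wavefront-set element is never entirely zero, so some connected component contains an edge). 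The presence of loops is absorbed exactly as in the proofs of Proposition~\ref{prop:well_def} and Proposition~\ref{prop:micro_cond}: if $\tilde{\chi}$ identifies a set of vertexes $I=\{i_1,\dots,i_n\}$ to a single point, each covector $\tilde{k}_e$ attached to an internal edge $e\in E^G_I$ occurs twice with opposite signs in $k_I=k_{i_1}+\dots+k_{i_n}$ and cancels, reducing the count to the loop-free situation.

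It then remains to extract the two contradictions. If every covector of an element of $\tilde{\Lambda}_{t-l}$ belonged to $\overline{V}_+$, then the covector over the minimal-time vertex $\bar{x}_-$ of any connected component carrying an edge would lie in $\overline{V}_-\setminus\{0\}$, contradicting $\overline{V}_+\cap\overline{V}_-=\{0\}$; symmetrically, if every covector belonged to $\overline{V}_-$, the covector over the maximal-time vertex $\bar{x}_+$ would lie in $\overline{V}_+\setminus\{0\}$, again a contradiction. This yields the asserted microlocal condition. The point demanding the most care — and what I regard as the main obstacle — is precisely the bookkeeping of the sign reversal in (\ref{eq:k_e_AF}): it interchanges the roles of $\bar{x}_+$ and $\bar{x}_-$ relative to Proposition~\ref{prop:micro_cond}, so that the extremal vertex invoked to reach each contradiction is the opposite one to that used in the Feynman case, while the non-vanishing of the extremal covector must still be secured.
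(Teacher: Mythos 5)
Your proposal is correct and takes essentially the same route as the paper: the paper's own justification of Proposition \ref{prop:micro_cond_ac} is simply the remark that the sign change in the prescription (\ref{eq:k_e_AF}) does not affect the validity of the arguments of Proposition \ref{prop:well_def} and Proposition \ref{prop:micro_cond}, whose passages are to be repeated. Your explicit bookkeeping --- that negating every edge covector interchanges the roles of $\bar{x}_+$ and $\bar{x}_-$ in the two contradictions, while the nonvanishing of the extremal covector survives because a positive combination of null covectors of one time-orientation cannot vanish --- is exactly the content left implicit in that remark.
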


It remains to consider the products of  Wightman two-point functions and their derivatives. We recall that these products are always well-defined, hence no renormalization is needed in this case. Without loss of generality, we can consider as working example the product appearing in formula (\ref{eq:gentermr1})
\begin{equation}
\label{eq:prodW}
\big(W(x_{\{l+2\le\cdot\le t+1\}}-x_{\{1\le\cdot\le l+1\}})\big)^k\big(\partial^\cdot_{\xi_{l+1}}W(x_{\{l+2\le\cdot\le t+1\}}-x_{l+1})\big)^i.
\end{equation}
The analogous product appearing in formula (\ref{eq:genterms2}) can be treated in the same way. The only difference between formulas (\ref{eq:gentermr1}) and (\ref{eq:genterms2}) is in the way the dependence of the various elements on the coordinates $(x_1,\dots,x_{t+1})$ is distributed.

Once more, we can estimate the wavefront set of such products by means of the graph notation introduced above. The wavefront set of the Wightman two-point function $W$ is given by (\cite{D}):
\begin{equation*}
\label{eq:WFW}
\Set{(x,k)\in T^\ast\mathbb{M}_2 |\,|x|^2=0,\,|k|^2=0,\,\lambda k=\eta_\flat(x),\,\lambda\in\mathbb{R}\,\,\text{s.t.}\,\,k\in\partial\overline{V}_+\setminus\Set{0}}.
\end{equation*}
Hence we have to modify the convention (\ref{eq:k_e_F}) in the following way: for vertexes $1\le i<j\le t+1$ connected by an edge $e$, we set source $\sigma(e):=j$ and target $\tau(e):=i$, and define an immersion $(\hat{\chi},\hat{\kappa})$ by
\begin{equation*}
\label{eq:k_e_W}
\begin{cases}
\hat{k}_e=\lambda_{ji}\eta_\flat(x_j-x_i)\quad\text{with}\,\,\lambda_{ji}\in\mathbb{R}\quad\text{s.t.}\quad\hat{k}_e\in\partial\overline{V}_+\setminus\Set{0},\quad\text{if}\quad x_j\neq x_i, \\
\hat{k}_e\in \partial\overline{V}_+\setminus\Set{0},\qquad\text{if}\quad x_j=x_i.
\end{cases}
\end{equation*}
We obtain then the following description for the wavefront set $\Omega_{t+1}$ of the product of Wightman two-point functions and their derivatives (\ref{eq:prodW}):
\begin{equation*}
\begin{split}
\Omega_{t+1}=\Biggl\{(&x_{1},k_{1};\dots;x_{t+1},k_{t+1})\in T^\ast\mathbb{M}_2^{t+1}\,|\,\,\exists\,G\in\mathcal{G}_{t+1}\quad\text{and} \\
&\exists\quad\text{an immersion}\quad(\hat{\chi},\hat{\kappa})\quad\text{of}\quad G\quad\text{such that} \\
&\quad k_i=\sum_{\substack{e\in E^G \\ \sigma(e)=i}}\hat{k}_e-\sum_{\substack{f\in E^G \\ \tau(f)=i}}\hat{k}_f\Biggr\}.
\end{split}
\end{equation*}

\begin{rem}
\label{rem:estimate WF W}
Considering how the coordinates $(x_1,\dots,x_{t+1})$ are distributed in formula (\ref{eq:prodW}), we see that the vertexes $\Set{x_{l+2},\dots,x_{t+1}}$ only have outgoing edges. Conversely the vertexes $\Set{x_1,\dots,x_{l+1}}$ only have incoming edges. This means that the wavefront set $\Omega_{t+1}$ of the product (\ref{eq:prodW}) can be estimated by a more explicit expression, namely:
\begin{equation}
\label{eq:WF_prodW}
\begin{split}
\Omega_{t+1}\subseteq\tilde{\Omega}_{t+1}:=\Big\{(&x_1,k_1;\dots;x_{l+1},k_{l+1};x_{l+2},k_{l+2};\dots;x_{t+1},k_{t+1})\in T^\ast\mathbb{M}_2^{t+1} \\
&\text{s.t.}\quad k_1,\dots,k_{l+1}\in\overline{V}_-\quad\text{and}\quad k_{l+2},\dots,k_{t+1}\in\overline{V}_+\Big\}.
\end{split}
\end{equation}
\end{rem}
\begin{theorem}
\label{theo:main}
The retarded components $R_t\Big(L_{\text{int}}^{\otimes t}\otimes s^N_1\Big)$ and $R_t\Big(L_{\text{int}}^{\otimes t}\otimes s^N_2\Big)$ of the higher conserved currents of the sine-Gordon model are super-renormalizable by power counting in pAQFT.
\end{theorem}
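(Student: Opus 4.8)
The plan is to read this statement as the final assembly of the piecewise renormalization carried out in Sections \ref{sec:renormalization to derF} and \ref{sec:piecewise ren ac}. First I would recall that, by formula (\ref{eq:decomposition ret s1}) and linearity of the retarded products, it suffices to treat $R_t\big(L_{\text{int}}^{\otimes t}\otimes s^N_2\big)$ and $R_t\big(L_{\text{int}}^{\otimes t}\otimes(\sin(a\varphi)r^N_1)\big)$ (the $\cos(a\varphi)q^N_1$ term being completely analogous). Each of these is a finite sum of generic terms of the form (\ref{eq:genterms2}), respectively (\ref{eq:gentermr1}), and every such term factorizes into four mutually independent building blocks: a time-ordered product of vertex operators paired with derivatives of Feynman propagators, an anti-chronological product of vertex operators, a product of Wightman two-point functions and their $\xi$-derivatives, and the functional derivatives of the classical component, which are smooth and hence carry no wavefront set.

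The super-renormalizability by power counting I would extract directly from the scaling-degree bookkeeping. Since $\text{sd}(\Delta^F)=0$ on $\mathbb{M}_2\setminus\{0\}$ and each $\xi$-derivative raises the scaling degree by at most one, the total number of derivatives acting on the Feynman propagators is controlled by the number of non-vanishing functional derivatives of the classical component, which by Proposition \ref{prop:deg_cons_curr} never exceeds $\text{deg}(s^N_2)=2(N+1)$, respectively $\text{deg}(r^N_1)=2N$. Lemma \ref{lemma:sd bound s2} then bounds the scaling degree of every elementary factor to be extended by this fixed classical degree, \emph{uniformly in the perturbative order $t$}. By the scaling-degree-preserving extension theorems (\cite{BF}, \cite{D}), each factor admits an extension to all of $\mathbb{M}_2$ whose renormalization ambiguity is a finite number of Dirac-delta derivatives, bounded by $\text{sd}-2$, hence by $2N$ for $s^N_2$ and $2(N-1)$ for $s^N_1$ (Remark \ref{rem:super s2}). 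As this bound does not grow with $t$, only finitely many counterterms of bounded order appear at every order, which is exactly super-renormalizability by power counting.

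It then remains to show that reassembling the piecewise renormalized blocks yields a well-defined distribution on the whole $\mathbb{M}_2^{t+1}$, i.e.\ that their product satisfies H\"ormander's sufficient criterion. Here I would combine the three wavefront estimates already at hand: the time-ordered-plus-Feynman block has wavefront set $\Lambda_{l+1}$ satisfying $\Lambda_{l+1}\cap\big((\mathbb{M}_2\times\overline{V}_-)^{l+1}\cup(\mathbb{M}_2\times\overline{V}_+)^{l+1}\big)=\emptyset$ (Proposition \ref{prop:micro_cond}); the anti-chronological block has wavefront set $\tilde{\Lambda}_{t-l}$ obeying the analogous condition (Proposition \ref{prop:micro_cond_ac}); and the Wightman block has wavefront set contained in $\tilde{\Omega}_{t+1}$, whose covectors lie in $\overline{V}_-$ over the vertexes $x_1,\dots,x_{l+1}$ and in $\overline{V}_+$ over $x_{l+2},\dots,x_{t+1}$ (Remark \ref{rem:estimate WF W}, formula (\ref{eq:WF_prodW})). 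Suppose a nontrivial collection of covectors from these sets summed, vertex by vertex, to zero. Over each of $x_1,\dots,x_{l+1}$ the only contributions come from $\Lambda_{l+1}$ and from the $\overline{V}_-$-valued Wightman covector, forcing the corresponding $\Lambda_{l+1}$-covector into $\overline{V}_+$; over each of $x_{l+2},\dots,x_{t+1}$ the same reasoning forces the $\tilde{\Lambda}_{t-l}$-covector into $\overline{V}_-$. By Propositions \ref{prop:micro_cond} and \ref{prop:micro_cond_ac} this is impossible unless both the time-ordered and the anti-chronological covectors vanish identically, which in turn forces every Wightman covector to vanish, contradicting nontriviality. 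Hence the product is well-defined.

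The conceptual heart of the argument, and the step I expect to be the main obstacle, is precisely this last gluing: the three blocks carry wavefront sets supported on \emph{different} light-cone directions, and it is the role of the Wightman functions to bridge the time-ordered and anti-chronological sectors with covectors confined to opposite closed cones. The microlocal conditions of Propositions \ref{prop:micro_cond} and \ref{prop:micro_cond_ac} are exactly what forbids the cancellations that would otherwise spoil H\"ormander's criterion, so the delicate point is to verify that these cone constraints survive the pull-back to $\mathbb{M}_2^{t+1}$ and remain valid in the presence of loops in the immersed graphs.
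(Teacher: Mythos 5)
Your proposal is correct and follows essentially the same route as the paper: super-renormalizability from the scaling-degree bounds of Lemma \ref{lemma:sd bound s2} and Remark \ref{rem:super s2}, and well-posedness of the reassembled product via H\"ormander's criterion using Propositions \ref{prop:micro_cond}, \ref{prop:micro_cond_ac} and the estimate (\ref{eq:WF_prodW}). The only difference is presentational: the paper packages the time-ordered and anti-chronological blocks as a tensor product with wavefront set $\Lambda_{l+1,t-l}$ and checks that $\Lambda_{l+1,t-l}+\Omega_{t+1}$ avoids the zero section, whereas you run the same cone argument slot by slot, which is equivalent.
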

\begin{proof}
In the previous part of this Section we have collected almost all the elements to prove our conclusive result. It is sufficient now to prove that the distributional product of renormalized anti-chronological products of vertex operators, renormalized time-ordered products of vertex operators and derivatives of Feynman propagators and Wightman two-point functions and their derivatives is well-defined on $\mathbb{M}_2^{t+1}$ according to H\"ormander's criterion.

We know from formula (\ref{eq:WF_prodW}), an explicit estimate on the wavefront set of the product of Wightman two-point functions and their derivatives. On the other hand, we can regard the product of renormalized anti-chronological products of vertex operators with renormalized time-ordered products of vertex operators and derivatives of Feynman propagators, defined respectively on $\mathbb{M}_2^{t-l}$ and $\mathbb{M}_2^{l+1}$, as a tensor product of distributions. We denote it by:
\begin{equation*}
\label{eq:prod_ac_to_Fp}
\begin{split}
\text{ACV}_{t-l}\otimes\text{TOF}_{l+1}:=\Big(&\big[\bar{T}_{t-l}\big(V_{a_{l+2}}(x_{l+2})\otimes\dots\otimes V_{a_{t+1}}(x_{t+1})\big)\big]\Big)\otimes \\
\otimes\Big(&\big[T_{l+1}\big(V_{a_1}(x_1)\otimes\dots\otimes V_{a_l}(x_l)\otimes V_{a_{l+1}}(x_{l+1})\big)\big]\cdot \\
\cdot&[(\partial_{\xi_{l+1}}^{i_{1,1}}\Delta^F)\cdots(\partial_{\xi_{l+1}}^{i_{1,n_1}}\Delta^F)](x_1-x_{l+1})\cdots \\
&\cdots[(\partial_{\xi_{l+1}}^{i_{l,1}}\Delta^F)\cdots(\partial_{\xi_{l+1}}^{i_{l,n_l}}\Delta^F)](x_l-x_{l+1})\Big)\in\mathscr{D}'\big(\mathbb{M}_2^{t+1}\big).
\end{split}
\end{equation*}
From the properties of the tensor product of distributions (\cite{Horm}), we have that the wavefront set of $\text{ACV}_{t-l}\otimes\text{TOF}_{l+1}$ is contained in the set:
\begin{equation*}
\label{eq:WF_prod_ac_to_Fp}
\begin{split}
\Lambda_{l+1,t-l}:=\bigg(&\Lambda_{l+1}\times\tilde{\Lambda}_{t-l}\bigg)\cup\bigg(\Lambda_{l+1}\times\big(\mathbb{M}_2\times\{0\}\big)^{t-l}\bigg)\,\cup \\
&\cup\bigg(\big(\mathbb{M}_2\times\{0\}\big)^{l+1}\times\tilde{\Lambda}_{t-l}\bigg)\subseteq T^\ast\mathbb{M}_2^{t+1},
\end{split}
\end{equation*}
where $\Lambda_{l+1}$ and $\tilde{\Lambda}_{t-l}$ are defined by formula (\ref{eq:Lambda_l+1}) and formula (\ref{eq:tilde L}) respectively. From Proposition \ref{prop:micro_cond} and Proposition \ref{prop:micro_cond_ac}, we have that
\begin{equation}
\label{eq:microlocal prod}
\begin{split}
\Lambda_{l+1,t-l}\cap\bigg(\Big(\big(&\mathbb{M}_2\times\bar{V}_-\big)^{l+1}\cup\big(\mathbb{M}_2\times\bar{V}_+\big)^{l+1}\Big)\times \\
&\times\Big(\big(\mathbb{M}_2\times\bar{V}_-\big)^{t-l}\cup\big(\mathbb{M}_2\times\bar{V}_+\big)^{t-l}\Big)\bigg)=\emptyset.
\end{split}
\end{equation}
If we now consider the set
\begin{equation*}
\begin{split}
\Lambda_{l+1,t-l}+\Omega_{t+1}:=\big\{&(x_1,r_1+s_1;\dots;x_{t+1},r_{t+1}+s_{t+1})\in T^\ast\mathbb{M}_2^{t+1}\,| \\ &(x_1,r_1;\dots;x_{t+1},r_{t+1})\in\Lambda_{l+1,t-l}, \\
&\text{and}\,(x_1,s_1;\dots;x_{t+1},s_{t+1})\in\Omega_{t+1}\big\}
\end{split}
\end{equation*}
and compare formula (\ref{eq:WF_prodW}) and formula (\ref{eq:microlocal prod}), we see immediately that
\begin{equation*}
\big(\Lambda_{l+1,t-l}+\Omega_{t+1}\big)\bigcap\big(\mathbb{M}_2^{t+1}\times\Set{0}\big)=\emptyset.
\end{equation*}
Hence H\"ormander's sufficient criterion is satisfied.
\end{proof}
\noindent
\textbf{Conclusion.}\,\, We have shown that the renormalization of the time-ordered products and of the anti-chronological products of interactions does not increase the scaling degree estimates for the piecewise renormalized components of the currents which makes them super-renormalizable.

\section*{Acknowledgements}

The author is deeply grateful to Dorothea Bahns for the introduction to this fascinating subject, for providing many mathematical and conceptual tools and for fundamental support during several discussions. Thanks also to Daniela Cadamuro and Markus Fr\"ob for pointing out technical difficulties and to Arne Hofmann for useful suggestions. This work was funded by the German Science Foundation via the Research Training Group 2491 ``Fourier Analysis and Spectral Theory''.

\newpage

\appendix
\section{Proof of Proposition \ref{prop:rec_formula}}
\label{app:Back_coupling_const}
\begin{proof}
First we substitute (\ref{eq:powA}) in (\ref{eq:MBTa}) and use the power series expansion of sine to get
\begin{equation*}
\sum_{\nu=0}^\infty A_{\nu,\xi}\alpha^\nu=-\varphi_\xi+\frac{2}{\alpha}\sum_{\mu=0}^\infty\frac{(-1)^\mu}{(2\mu+1)!}\big(\frac{1}{2}a\big)^{2\mu+1}\Big(\sum_{\nu=0}^\infty A_\nu\alpha^\nu-\varphi\Big)^{2\mu+1}.
\end{equation*}
Requiring that the limit for $\alpha\rightarrow 0$ of this equation exists gives $A_0=\varphi$, hence formula above becomes
\begin{equation}
\label{eq:recMBT}
\sum_{\nu=0}^\infty A_{\nu,\xi}\alpha^\nu=-\varphi_\xi+2\sum_{\mu=0}^\infty\frac{(-1)^\mu}{(2\mu+1)!}\big(\frac{1}{2}a\big)^{2\mu+1}\alpha^{2\mu}\Big(\sum_{\nu=0}^\infty A_{\nu+1}\alpha^\nu\Big)^{2\mu+1}.
\end{equation}
Now we start comparing the coefficients from the left hand side and the right hand side of equation (\ref{eq:recMBT}) for the first orders:
\begin{itemize}
\item At order $0$, we have: $\quad A_{0,\xi}=-\varphi_\xi+2\cdot\frac{1}{2}aA_1\quad\longrightarrow\quad A_1=\frac{2}{a}\varphi_\xi$.
\item At order $1$ we get: $\quad A_{1,\xi}=2\cdot\frac{1}{2}aA_2\quad\longrightarrow\quad A_2=\frac{2}{a^2}\varphi_{\xi\xi}$.
\end{itemize}
For orders $\ge 2$, we rearrange the summation on the right hand side of equation (\ref{eq:recMBT}) in the following way:
\begin{equation*}
\begin{split}
&\sum_{\mu=0}^\infty\frac{(-1)^\mu}{(2\mu+1)!}\big(\frac{1}{2}a\big)^{2\mu+1}\alpha^{2\mu}\Big(\sum_{\nu=0}^\infty A_{\nu+1}\alpha^\nu\Big)^{2\mu+1}= \\
&=\sum_{\mu,\rho=0}^\infty(-1)^\mu\big(\frac{1}{2}a\big)^{2\mu+1}\Big(\sum_{\substack{n_0,\dots,n_\rho\ge 0 \\ n_0+\dots+n_\rho=2\mu+1 \\ 1\cdot n_1+\dots+\rho\cdot n_\rho=\rho}}\frac{A_1^{n_0}\cdots A_{\rho+1}^{n_\rho}}{n_0!\cdots n_\rho!}\Big)\alpha^{\rho+2\mu}.
\end{split}
\end{equation*}
We rewrite the double summation using indexes $\nu:=\rho+2\mu$ and $\beta:=\mu$, so to get the expression:
\begin{equation*}
\sum_{\nu=0}^\infty\alpha^\nu\Big(\sum_{\beta=0}^{[\frac{\nu}{2}]}(-1)^\beta\big(\frac{1}{2}a\big)^{2\beta+1}\sum_{\substack{n_0,\dots,n_{\nu-2\beta}\ge 0 \\ n_0+\dots+n_{\nu-2\beta}=2\beta+1 \\ 1\cdot n_1+\dots +(\nu-2\beta)\cdot n_{\nu-2\beta}=\nu-2\beta}}\frac{A_1^{n_0}\cdots A_{\nu-2\beta+1}^{n_{\nu-2\beta}}}{n_0!\cdots n_{\nu-2\beta}!}\Big),
\end{equation*}
where $[\frac{\nu}{2}]$ is the integer part of $\frac{\nu}{2}$. \\
We now observe that we can decompose the coefficient of $\alpha^\nu$ in two parts, one corresponding to $\beta\ge 1$ and the other for $\beta=0$, respectively:
\begin{equation*}
\begin{split}
&\sum_{\beta=0}^{[\frac{\nu}{2}]}(-1)^\beta\big(\frac{1}{2}a\big)^{2\beta+1}\sum_{\substack{n_0,\dots,n_{\nu-2\beta}\ge 0 \\ n_0+\dots+n_{\nu-2\beta}=2\beta+1 \\ 1\cdot n_1+\dots +(\nu-2\beta)\cdot n_{\nu-2\beta}=\nu-2\beta}}\frac{A_1^{n_0}\cdots A_{\nu-2\beta+1}^{n_{\nu-2\beta}}}{n_0!\cdots n_{\nu-2\beta}!}= \\
&=\sum_{\beta=1}^{[\frac{\nu}{2}]}(-1)^\beta\big(\frac{1}{2}a\big)^{2\beta+1}\sum_{\substack{n_0,\dots,n_{\nu-2\beta}\ge 0 \\ n_0+\dots+n_{\nu-2\beta}=2\beta+1 \\ 1\cdot n_1+\dots +(\nu-2\beta)\cdot n_{\nu-2\beta}=\nu-2\beta}}\frac{A_1^{n_0}\cdots A_{\nu-2\beta+1}^{n_{\nu-2\beta}}}{n_0!\cdots n_{\nu-2\beta}!}+ \\
&\quad+\frac{a}{2}\sum_{\substack{n_0,\dots,n_{\nu}\ge 0 \\ n_0+\dots+n_{\nu}=1 \\ 1\cdot n_1+\dots +\nu\cdot n_{\nu}=\nu}}\frac{A_1^{n_0}\cdots A_{\nu+1}^{n_{\nu}}}{n_0!\cdots n_{\nu}!}.
\end{split}
\end{equation*}
In particular the last term reduces to $\frac{a}{2}A_{\nu+1}$. Comparing the coefficients of the power $\alpha^\nu$, for $\nu\ge 2$, from equation (\ref{eq:recMBT}), we get:
\begin{equation*}
\begin{split}
A_{\nu,\xi}=&2\sum_{\beta=1}^{[\frac{\nu}{2}]}(-1)^\beta\big(\frac{1}{2}a\big)^{2\beta+1}\sum_{\substack{n_0,\dots,n_{\nu-2\beta}\ge 0 \\ n_0+\dots+n_{\nu-2\beta}=2\beta+1 \\ 1\cdot n_1+\dots +(\nu-2\beta)\cdot n_{\nu-2\beta}=\nu-2\beta}}\frac{A_1^{n_0}\cdots A_{\nu-2\beta+1}^{n_{\nu-2\beta}}}{n_0!\cdots n_{\nu-2\beta}!}+ \\
&+aA_{\nu+1}.
\end{split}
\end{equation*}
Extracting $A_{\nu+1}$ and rescaling the summation over $\beta$, we conclude.
\end{proof}

\section{Proof of Proposition \ref{prop:comp_cons_curr}}
\label{app:proof_prop_cons_curr}
\begin{proof}
First we introduce some notation. We define:
\begin{equation*}
\begin{split}
&\varphi+\hat{B}_\alpha\varphi=:\sum_{\nu=0}^\infty A_\nu^+\alpha^\nu,\quad\text{where}\quad
\begin{cases}
A_0^+=2\varphi \\
A_\nu^+=A_\nu\qquad\forall\nu\ge 1,
\end{cases} \\
&\varphi-\hat{B}_\alpha\varphi=:\sum_{\nu=0}^\infty A_\nu^-\alpha^\nu,\quad\text{where}\quad
\begin{cases}
A_0^-=0 \\
A_\nu^-=-A_\nu\qquad\forall\nu\ge 1.
\end{cases}
\end{split}
\end{equation*}
We use the power series expansion of cosine and substitute equations above to get the following expressions for the components of the conserved currents:
\begin{equation*}
\begin{split}
s^{(\alpha)}_1&=\sum_{\mu=0}^\infty\frac{(-1)^\mu}{(2\mu)!}\Big(\frac{1}{2}a\Big)^{2\mu}\Big[\big(\sum_{\nu=0}^\infty A_\nu^+\alpha^\nu\big)^{2\mu}+\big(\sum_{\nu=0}^\infty A_\nu^+(-\alpha)^\nu\big)^{2\mu}\Big], \\
s^{(\alpha)}_2&=-\frac{1}{\alpha^2}\sum_{\mu=1}^\infty\frac{(-1)^\mu}{(2\mu)!}\Big(\frac{1}{2}a\Big)^{2\mu}\Big[\big(\sum_{\nu=0}^\infty A_\nu^-\alpha^\nu\big)^{2\mu}+\big(\sum_{\nu=0}^\infty A_\nu^-(-\alpha)^\nu\big)^{2\mu}\Big].
\end{split}
\end{equation*}
We remark that both formulas above are symmetric in $\alpha$, so only even powers will appear. We further manipulate the two components separately. Starting with $s^{(\alpha)}_1$, we expand $\big(\sum_{\nu=0}^\infty A_\nu^+\alpha^\nu\big)^{2\mu}$ and $\big(\sum_{\nu=0}^\infty A_\nu^+(-\alpha)^\nu\big)^{2\mu}$, collect the coefficients of the even powers $\alpha^{2\rho}$ and obtain:
\begin{equation*}
\begin{split}
s^{(\alpha)}_1=\sum_{\rho=0}^\infty\alpha^{2\rho}\Big[2\sum_{\mu=0}^\infty(-1)^\mu\Big(\frac{1}{2}a\Big)^{2\mu}\Big(\sum_{\substack{n_0,\dots,n_{2\rho}\ge 0 \\ n_0+\dots+n_{2\rho}=2\mu \\ 1\cdot n_1+\dots +2\rho\cdot n_{2\rho}=2\rho}}\frac{(A^+_0)^{n_0}\cdots (A^+_{2\rho})^{n_{2\rho}}}{n_0!\cdots n_{2\rho}!}\Big)\Big].
\end{split}
\end{equation*}
We now concentrate on the coefficient of the power $\alpha^{2\rho}$, we call it $s_1^\rho$:
\begin{equation}
\label{eq:coeff_s1}
s_1^\rho=2\sum_{\mu=0}^\infty(-1)^\mu\Big(\frac{1}{2}a\Big)^{2\mu}\Big(\sum_{\substack{n_0,\dots,n_{2\rho}\ge 0 \\ n_0+\dots+n_{2\rho}=2\mu \\ 1\cdot n_1+\dots +2\rho\cdot n_{2\rho}=2\rho}}\frac{(A^+_0)^{n_0}\cdots (A^+_{2\rho})^{n_{2\rho}}}{n_0!\cdots n_{2\rho}!}\Big).
\end{equation}
Specifically, we want to extract the dependence of the powers of $A_0^+$ on $\mu$. Introducing the index $\beta$ to account for the possible values of the exponent $n_0$, we can rewrite formula (\ref{eq:coeff_s1}) in the following manner:
\begin{equation*}
\label{eq:coeff_s1_n0}
\begin{split}
2\sum_{\beta=0}^{2\rho}\sum_{\mu\ge\frac{\beta}{2}}(-1)^\mu\Big(\frac{1}{2}a\Big)^{2\mu}\frac{(A_0^+)^{2\mu-\beta}}{(2\mu-\beta)!}\bigg(\sum_{\substack{n_1,\dots,n_{2\rho}\ge 0 \\ n_1+\dots+n_{2\rho}=\beta \\ 1\cdot n_1+\dots +2\rho\cdot n_{2\rho}=2\rho}}\frac{(A^+_1)^{n_1}\dots(A^+_{2\rho})^{n_{2\rho}}}{n_1!\dots n_{2\rho}!}\bigg).
\end{split}
\end{equation*}  
Then we distinguish the cases when $\beta$ is even or odd. The terms for $\beta$ even can be collected in the expression:
\begin{equation*}
\label{eq:beta_even}
\begin{split}
2\sum_{\beta=0}^{\rho}\sum_{\mu=\beta}^\infty(-1)^\mu\Big(\frac{1}{2}a\Big)^{2\mu}\frac{(A_0^+)^{2(\mu-\beta)}}{\big(2(\mu-\beta)\big)!}\bigg(\sum_{\substack{n_1,\dots,n_{2\rho}\ge 0 \\ n_1+\dots+n_{2\rho}=2\beta \\ 1\cdot n_1+\dots +2\rho\cdot n_{2\rho}=2\rho}}\frac{(A^+_1)^{n_1}\cdots (A^+_{2\rho})^{n_{2\rho}}}{n_1!\cdots n_{2\rho}!}\bigg).
\end{split}
\end{equation*} 
Rescaling the summation over $\mu$ we recognize the power series expansion of $\cos\big(\frac{1}{2}aA_0^+\big)=\cos(a\varphi)$. Hence for $\beta$ even we obtain the coefficient:
\begin{equation*}
\label{eq:s1_cos}
\begin{split}
\cos(a\varphi)\Big[2\sum_{\beta=0}^{\rho}(-1)^\beta\Big(\frac{1}{2}a\Big)^{2\beta}\sum_{\substack{n_1,\dots,n_{2\rho}\ge 0 \\ n_1+\dots+n_{2\rho}=2\beta \\ 1\cdot n_1+\dots +2\rho\cdot n_{2\rho}=2\rho}}\frac{(A^+_1)^{n_1}\cdots (A^+_{2\rho})^{n_{2\rho}}}{n_1!\cdots n_{2\rho}!}\Big].
\end{split}
\end{equation*}
On the other hand, assuming $\rho\ge 1$, the terms for $\beta$ odd are
\begin{equation*}
\label{eq:beta_odd}
\begin{split}
2\sum_{\beta=0}^{\rho-1}\sum_{\mu=\beta+1}^\infty(-1)^\mu\Big(\frac{1}{2}a\Big)^{2\mu}\frac{(A_0^+)^{2\mu-2\beta-1}}{(2\mu-2\beta-1)!}\sum_{\substack{n_1,\dots,n_{2\rho}\ge 0 \\ n_1+\dots+n_{2\rho}=2\beta+1 \\ 1\cdot n_1+\dots +2\rho\cdot n_{2\rho}=2\rho}}\frac{(A^+_1)^{n_1}\cdots (A^+_{2\rho})^{n_{2\rho}}}{n_1!\cdots n_{2\rho}!}.
\end{split}
\end{equation*}
Rescaling the summation over $\mu$, we recognize the power series expansion of $\sin\big(\frac{1}{2}aA_0^+\big)=\sin(a\varphi)$. Hence for $\beta$ odd we obtain the coefficient:
\begin{equation*}
\label{eq:s1_sin}
\begin{split}
\sin(a\varphi)\Big[2\sum_{\beta=0}^{\rho-1}(-1)^{\beta+1}\Big(\frac{1}{2}a\Big)^{2\beta+1}\sum_{\substack{n_1,\dots,n_{2\rho}\ge 0 \\ n_1+\dots+n_{2\rho}=2\beta+1 \\ 1\cdot n_1+\dots +2\rho\cdot n_{2\rho}=2\rho}}\frac{(A^+_1)^{n_1}\cdots (A^+_{2\rho})^{n_{2\rho}}}{n_1!\cdots n_{2\rho}!}\Big].
\end{split}
\end{equation*}
Using the fact that $A^+_\nu=A_\nu$, for $\nu\ge 1$, and changing the name of the upper index $\rho$ to $N$, we obtain the expected result for $s_1^N$.

For what concerns $s^{(\alpha)}_2$, we use the fact that $A^-_0=0$ to extract a power $\alpha^{2\mu}$, then we divide by $\alpha^2$ and finally rewrite the summations rescaling the indexes, thus obtaining:
\begin{equation*}
\begin{split}
s^{(\alpha)}_2&=\sum_{\mu=0}^\infty\frac{(-1)^\mu}{(2(\mu+1))!}\Big(\frac{1}{2}a\Big)^{2(\mu+1)}\alpha^{2\mu}\cdot \\
&\qquad\qquad\qquad\cdot\Big[\big(\sum_{\nu=0}^\infty A_{\nu+1}^-\alpha^\nu\big)^{2(\mu+1)}+\big(\sum_{\nu=0}^\infty A_{\nu+1}^-(-\alpha)^\nu\big)^{2(\mu+1)}\Big].
\end{split}
\end{equation*}

Expanding $\big(\sum_{\nu=0}^\infty A_{\nu+1}^-\alpha^\nu\big)^{2(\mu+1)}$ and $\big(\sum_{\nu=0}^\infty A_{\nu+1}^-(-\alpha)^\nu\big)^{2(\mu+1)}$ we see that again only the even powers of $\alpha$ survive and they give:
\begin{equation*}
\begin{split}
s^{(\alpha)}_2=\sum_{\mu=0}^\infty &\frac{(-1)^\mu}{(2(\mu+1))!}\Big(\frac{1}{2}a\Big)^{2(\mu+1)}\alpha^{2\mu}\cdot \\
&\cdot\Big[2\sum_{\rho=0}^\infty\alpha^{2\rho}\Big(\sum_{\substack{n_0,\dots,n_{2\rho}\ge 0 \\ n_0+\dots+n_{2\rho}=2(\mu+1) \\ 1\cdot n_1+\dots+ 2\rho\cdot n_{2\rho}=2\rho}}\frac{(2(\mu+1))!}{n_0!\dots n_{2\rho}!}(A^-_1)^{n_0}\dots (A^-_{2\rho+1})^{n_{2\rho}}\Big)\Big].
\end{split}
\end{equation*}

Collecting the powers of $\alpha$, rewriting the summation using indexes $N:=\mu+\rho$ and $\mu$ and recalling that $A_\nu^-=-A_\nu$ for $\nu\ge 1$, we finally obtain that the coefficient of $\alpha^{2N}$ is:
\begin{equation*}
\begin{split}
s_2^N=2\sum_{\mu=0}^N(-1)^\mu\Big(\frac{1}{2}a\Big)^{2(\mu+1)}\sum_{\substack{n_0,\dots,n_{2(N-\mu)}\ge 0 \\ n_0+\dots+n_{2(N-\mu)}=2(\mu+1) \\ 1\cdot n_1+\dots +2(N-\mu)\cdot n_{2(N-\mu)}=2(N-\mu)}}\frac{A_1^{n_0}\cdots A_{2(N-\mu)+1}^{n_{2(N-\mu)}}}{n_0!\cdots n_{2(N-\mu)}!}.
\end{split}
\end{equation*}
\end{proof}

%\clearpage
%\phantomsection
\addcontentsline{toc}{section}{\refname}

\end{document}